\newtheorem{theorem}{Theorem}[section]
\newtheorem{definition}[theorem]{Definition}
\newtheorem{lemma}[theorem]{Lemma}
\newtheorem{corollary}[theorem]{Corollary}
\newtheorem{remark}[theorem]{Remark}
\numberwithin{equation}{section}
\begin{document}

\title{Fractional Barndorff-Nielsen and Shephard model: applications in variance and volatility swaps, and hedging}
\author{ Nicholas Salmon\footnote{Email: nicholas.salmon@ndsu.edu}, Indranil SenGupta\footnote{Email: indranil.sengupta@ndsu.edu} \\ Department of Mathematics \\ North Dakota State University \\ Fargo, North Dakota, USA.}
\date{\today}
\maketitle

\begin{abstract}

In this paper, we introduce and analyze the fractional Barndorff-Nielsen and Shephard (BN-S) stochastic volatility model. The proposed model is based upon two desirable properties of the long-term variance process suggested by the empirical data: long-term memory and jumps. The proposed model incorporates the long-term memory and positive autocorrelation properties of fractional Brownian motion with $H>1/2$, and the jump properties of the BN-S model. We find arbitrage-free prices for variance and volatility swaps for this new model. Because fractional Brownian motion is still a Gaussian process, we derive some new expressions for the distributions of integrals of continuous Gaussian processes as we work towards an analytic expression for the prices of these swaps. The model is analyzed in connection to the quadratic hedging problem and some related analytical results are developed. The amount of derivatives required to minimize a quadratic hedging error is obtained. Finally, we provide some numerical analysis based on the VIX data. Numerical results show the efficiency of the proposed model compared to the Heston model and the classical BN-S model.

\end{abstract}

\textsc{Key Words:}  Fractional Brownian motion, Young integral, Ornstein-Uhlenbeck process, Swaps, Quadratic hedging. 


\section{Introduction}

The study of volatility has become a fundamental element in the finance literature since Chicago Board Option Exchange (CBOE) volatility index (VIX) was publicly announced in 1993. Empirical data in the financial analysis suggest that the volatility of asset prices is persistent. This means that the associated autocorrelation function decays very slowly (for example, see \cite{Baillie10, Baillie20, ContRama, Ding}). For this reason, it has become increasingly important to implement stochastic volatility models that enjoy the long memory property. Consequently, for such processes, the autocorrelation function is not integrable. In the pioneering work \cite{Comte}, the authors introduce a continuous-time model of stochastic volatility model with the
long memory property. The corresponding volatility process is called a ``trending stochastic volatility".  In their work they consider a fractional stochastic volatility model, which is based on a fractional Ornstein-Uhlenbeck process, driven by a fractional Brownian motion with Hurst parameter $H >\frac{1}{2}$.

It is well known (see, for example, \cite{anatobook}) that fractional Brownian motion (fBm) is neither a semimartingale nor a Markov process unless $H=\frac{1}{2}$. In that case, it coincides with the standard Brownian motion. The trajectories of fBm are H\"older continuous up to order $H$. Consequently, the quadratic variation of fBm is zero for $H>\frac{1}{2}$, and infinity for $H<\frac{1}{2}$. The Hurst index ($H$) describes the feature of the path of the fractional Brownian motion that is associated with it. A value of $H=\frac{1}{2}$ corresponds to non-correlating increments. A value $H \neq \frac{1}{2}$ corresponds correlating increments. In particular, $H>\frac{1}{2}$ corresponds to increments with positive correlation, and $H< \frac{1}{2}$ corresponds to increments with negative correlation.

In \cite{Comte2} a long memory extension of the Heston model is developed with the incorporation of a fractional Brownian motion. Long memory in the volatility process becomes much needed in order to justify the long-standing conundrums such as steep volatility smiles in long-term options and co-movements between implied and realized volatility. Volatility clustering, that can be observed through slowly decaying autocorrelations for absolute returns, is an important feature of most time series of financial asset returns. This can be argued to have been caused by the ``long-range dependence" of the stochastic volatility dynamics.

We note the contrast to the idea of ``rough” fractional volatility, which amounts to taking $H < \frac{1}{2}$.  A trending stochastic volatility model (with Hurst parameter $H > \frac{1}{2}$) is not efficient in describing the empirical short-time behavior of the volatility. In \cite{Gatheral} shows that for a short-time-interval, log-volatility behaves essentially as a fractional Brownian motion with Hurst exponent $H\sim 1/20$. This leads to the introduction of a rough stochastic volatility model with $ H<\frac{1}{2}$.  This has been studied in more detail in many works such as \cite{Alos, Bayer2, Bayer}. For the short-term, such models are very effective. In \cite{Bayer}, it is shown that the actual SPX variance swap curves are consistent with model forecasts, with particular examples from the weekend of the collapse of Lehman Brothers and the Flash Crash.

These two regimes for the values of $H$ (i.e., $H> \frac{1}{2}$ and $H<\frac{1}{2}$) can be blended, so that rough volatility governs the short time behavior while trending volatility affects the long time behavior. For example, in \cite{Bennedsen} the authors introduce a class of continuous-time stochastic volatility model of asset prices. The model can simultaneously incorporate roughness, for the short time, and slowly decaying autocorrelations, including proper long memory, which are two stylized facts often found in volatility data.

A proper statistical model of stochastic volatility is very useful in the analysis of variance and volatility swaps. Swaps written on the market volatility are financial instruments that are becoming increasingly useful for hedging and speculation of volatility. In recent literature (see \cite{sub1,sub2}) variance swaps are implemented to reduce the quadratic hedging errors of oil commodities.  In general, a swap is a financial derivative in which two counterparties exchange cash flows of two securities, interest rates, or other financial instruments for the mutual benefit of the exchangers. Volatility and variance swaps, respectively, are forward contracts on future realized volatility and the square of future realized volatility. Analysis of variance and volatility swaps are developed in connection to the underlying stochastic volatility models.  The literature devoted to volatility and variance swap developed rapidly during the past few years. For example, in \cite{anato1} a new probabilistic approach using the Heston model is proposed to study various swaps that depend on the volatility.  In \cite{anato3} variance swaps with delay are analyzed. In addition, analytical approximate formal asymptotic forms are obtained for expectation and variance of the realized continuously sampled variance for stochastic volatility with delay. In \cite{A4} the authors present a variance drift-adjusted version of the Heston model, that improves variance swap pricing based on the classical Heston model. In \cite{Kim} the authors choose the log-normal SABR model with fractional stochastic volatility to price variance and volatility swaps. A closed-form exact solution for the fair strike price is obtained in that work.

The literature described in the preceding paragraph do not incorporate the big fluctuations (the ``jumps") of a financial market. Many important features related to ``jumps" in an empirical market can be captured by models in which stochastic volatility of log-returns is constructed through Ornstein-Uhlenbeck (OU) type stationary stochastic process driven by a subordinator, where a subordinator is a L\'evy process with positive increments and no Gaussian component. Using L\'evy processes as driving noise, a large family of mean-reverting jump processes with linear dynamics can be constructed. The model that incorporates such processes offers the possibility of capturing important distributional deviations from Gaussianity and thus these are more practical models of dependence structures. This model is introduced in various works (see \cite{BN-S1, BN-S2}) of Barndorff-Nielsen and Shephard and is known in modern literature as the BN-S model. In \cite{vo1} the swaps written on powers of realized volatility for the BN-S model are analyzed. In \cite{ISGSH} the arbitrage-free pricing of variance and volatility swaps for BN-S model is studied.  In \cite{ISGSH} various approximate expressions for the pricing of volatility and variance swaps are obtained that provide a computational basis for the arbitrage-free variance and volatility swaps price. In addition, it is shown that for the approximate formulas, the error estimation in fitting the delivery price is much less than the existing models with comparable parameters. The analysis is further developed in \cite{AzizIndra}, where the bounds of the arbitrage-free variance swap price are found.

In this paper, we aim to combine the benefits of long-term memory of fractional Brownian motion with the jumps of the BN-S model. We improve the classical BN-S model by incorporating a fractional Ornstein-Uhlenbeck process, along with jumps driven by a subordinator, as is done in the classical BN-S model. We look to find arbitrage-free prices for variance and volatility swaps for this new model. Because fractional Brownian motion is still a Gaussian process, we derive some new expressions for the distributions of integrals of continuous Gaussian processes as we work towards an analytic expression for the prices of these swaps. The model is analyzed in connection to the quadratic hedging problem and some related analytical results are developed. We make use of these analytic expressions in simplified numerical computations to compare the performance of this new fractional model with previous popular models such as the classical BN-S model and the Heston model.

The paper is organized as follows.  In Section \ref{sec2} we discuss the mathematical and financial preliminaries that are useful for developing the analysis of this paper. In Section \ref{sec3} we propose the mathematical model, \emph{fractional  Barndorff-Nielsen and Shephard (BN-S) stochastic volatility model}, and implement that to analyze the variance and volatility swaps. In Section \ref{sec4} we derive some useful expressions for the conditional distribution \emph{fractional BN-S stochastic volatility model}. In addition, in Section \ref{sec4} we implement the proposed model to find an optimal hedging strategy. In Section \ref{sec5} various numerical results are provided. Finally, a brief conclusion is provided in Section \ref{sec6}.

\section{Some preliminaries with a basic model}
\label{sec2}

In this section, we introduce the mathematical preliminaries for this paper. In addition, we analyze a simplistic fractional log variance model. We conclude this section with some discussions on the variance and volatility swap.

\subsection{Mathematical Preliminaries} 
\label{sec21}

A random variable $X:\Omega\rightarrow\mathbb{R}^n$ is called a multivariate Gaussian, or just Gaussian, with mean $\mu$ and covariance matrix $\Sigma$, if it has a normal probability density function with mean $\mu$ and covariance matrix $\Sigma$. A Gaussian process is a stochastic process where all finite-dimensional distributions are Gaussian. There are two functions which characterize Gaussian processes- $\mu(t):=E[X_t]$, and $\rho(s,t) :=E[\langle X_s-\mu(s),X_t-\mu(t)\rangle]$.  These are called the mean and covariance functions respectively. If $\mu(t)=0$, $X_t$ is said to be centered. For the remainder of this article, we will deal with processes in $\mathbb{R}$, so the inner product will simply be a product. A well-known result affirms the following (see, for example, \cite{Kallenberg}).

\begin{theorem}
    Two Gaussian processes have the same law if and only if they have the same mean and covariance functions.
\end{theorem}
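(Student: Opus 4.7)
The plan is to reduce the equality of laws on path space to equality of finite-dimensional distributions, and then to exploit the fact that multivariate Gaussians are uniquely determined by their mean vectors and covariance matrices. Throughout, I would invoke the standard measure-theoretic principle that the law of a real-valued stochastic process $(X_t)_{t \in T}$ is uniquely determined by the family of finite-dimensional distributions $\{(X_{t_1},\ldots,X_{t_n}) : t_1,\ldots,t_n \in T,\, n \geq 1\}$; this is the uniqueness half of the Kolmogorov extension theorem, applied to the cylinder sets that generate the product $\sigma$-algebra.

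For the forward implication, I would simply note that if two processes have the same law, then any measurable functional of the path has the same distribution under both laws; in particular the projections onto finitely many coordinates agree, so the means $\mu(t) = E[X_t]$ and the covariances $\rho(s,t) = E[(X_s - \mu(s))(X_t - \mu(t))]$ coincide. No Gaussianity is used here.

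For the reverse implication, given two Gaussian processes $X$ and $Y$ with common mean function $\mu$ and common covariance function $\rho$, I would fix an arbitrary finite tuple $t_1,\ldots,t_n$ and consider the vectors $\mathbf{X} = (X_{t_1},\ldots,X_{t_n})$ and $\mathbf{Y} = (Y_{t_1},\ldots,Y_{t_n})$. By the definition of a Gaussian process, each of these is multivariate Gaussian; they share the mean vector $(\mu(t_1),\ldots,\mu(t_n))$ and the covariance matrix with entries $\rho(t_i,t_j)$. Since the characteristic function of a multivariate Gaussian is determined by its mean vector and covariance matrix via $\varphi(\xi) = \exp(i\xi^\top m - \tfrac{1}{2}\xi^\top \Sigma \xi)$, it follows that $\mathbf{X}$ and $\mathbf{Y}$ have the same distribution. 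As the tuple was arbitrary, all finite-dimensional distributions agree, and the extension theorem then yields equality of laws on the full path space.

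The only delicate point is the invocation of Kolmogorov's extension/uniqueness theorem, which requires specifying the underlying $\sigma$-algebra on paths so that cylinder sets generate it; this is routine in the settings of this paper (either $\mathbb{R}^T$ with the product $\sigma$-algebra, or, once continuity of sample paths is established, $C(T,\mathbb{R})$ with its Borel $\sigma$-algebra, where the two agree on the cylinder subsets). Apart from this bookkeeping, the argument is essentially the observation that Gaussian laws are characterized by two moments.
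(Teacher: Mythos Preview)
Your argument is correct and is precisely the standard proof one finds in textbooks such as the reference cited. Note, however, that the paper does not actually supply a proof of this statement: it merely records the result as well known and points to \cite{Kallenberg}. So there is no ``paper's own proof'' to compare against; your write-up is the standard justification behind that citation.
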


Another well-known theorem provides the foundation of the process $W^H_t$ that will be considered throughout this paper.
\begin{theorem}
    Let $0<H\leq 1$. Then there exists a continuous centered Gaussian process $W_t^H$ with covariance function given by
    \[\rho^H(s,t):=\frac{1}{2}(|t|^{2H}+|s|^{2H}-|t-s|^{2H}).\]
    Its sample paths are $\alpha$ H\"older for all $0<\alpha<H$.
\end{theorem}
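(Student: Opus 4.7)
The plan is to proceed in two stages: first construct \emph{some} process with the stated finite-dimensional law via the Kolmogorov extension theorem, and then upgrade to a continuous (in fact Hölder) modification via the Kolmogorov--Chentsov continuity criterion. The first stage requires that $\rho^H$ be symmetric and positive semidefinite; the second is almost automatic because the process is Gaussian.

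For the construction, I would first note that $\rho^H(s,t)$ is manifestly symmetric and $\rho^H(t,t)=|t|^{2H}\ge 0$. The real content is showing that for any $t_1,\dots,t_n\in\mathbb{R}$ and $a_1,\dots,a_n\in\mathbb{R}$ we have $\sum_{i,j}a_ia_j\rho^H(t_i,t_j)\ge 0$. I expect this to be the main obstacle, and I would discharge it by exhibiting an explicit Gaussian model whose covariance equals $\rho^H$. The cleanest choice is the Mandelbrot--Van Ness representation: define, relative to a two-sided standard Brownian motion $B$,
\[
W_t^H \;:=\; \frac{1}{C_H}\int_{\mathbb{R}}\Bigl[(t-u)_+^{H-1/2}-(-u)_+^{H-1/2}\Bigr]\,dB_u,
\]
for a suitable normalizing constant $C_H$. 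A direct It\^o isometry computation, after the change of variables $u\mapsto u+s$ and $u\mapsto u+t$, shows that $E[W_s^H W_t^H]$ equals $\rho^H(s,t)$, and the right choice of $C_H$ makes the constant correct. Since $W_t^H$ is a Wiener integral of a deterministic kernel, it is automatically a centered Gaussian process, so positive semidefiniteness of $\rho^H$ follows for free. (Alternatively, one could quote this verification from a standard reference on fBm.)

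Once $\rho^H$ is known to be a bona fide covariance, Kolmogorov's extension theorem yields a centered Gaussian process $(\tilde W_t^H)_{t\in\mathbb{R}}$ with covariance $\rho^H$. For the Hölder regularity, I would use that $\tilde W_t^H-\tilde W_s^H$ is a centered Gaussian with variance $\rho^H(t,t)+\rho^H(s,s)-2\rho^H(s,t)=|t-s|^{2H}$, so for every even integer $p\ge 2$,
\[
E\bigl[\,|\tilde W_t^H-\tilde W_s^H|^{p}\,\bigr] \;=\; c_p\,|t-s|^{pH},
\]
with $c_p$ the $p$-th absolute moment of a standard normal. Applying the Kolmogorov--Chentsov criterion on each compact interval, one obtains a continuous modification $W_t^H$ whose paths are Hölder of every order strictly less than $(pH-1)/p=H-1/p$. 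Letting $p\to\infty$ along even integers gives Hölder regularity of every order $\alpha<H$, as claimed.

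The main obstacle, as noted, is positive semidefiniteness of $\rho^H$; all other steps are appeals to the two named theorems plus the Gaussian moment identity. If one prefers to avoid the Mandelbrot--Van Ness integral, an alternative route is to verify $\sum a_ia_j\rho^H(t_i,t_j)\ge 0$ via the representation $|x|^{2H}=c_H\int_{\mathbb{R}}(1-\cos(\xi x))|\xi|^{-1-2H}\,d\xi$ for $0<H<1$, which reduces positive semidefiniteness to the elementary fact that $\sum a_ia_j(\cos(\xi t_i)+\cos(\xi t_j)-\cos(\xi(t_i-t_j)))$ is a sum of squares after a trigonometric identity; the boundary case $H=1$ is handled separately since there $W_t^1=tZ$ for a single standard normal $Z$.
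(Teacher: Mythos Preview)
Your proposal is correct and follows the standard route to constructing fractional Brownian motion: verify that $\rho^H$ is a legitimate covariance kernel (via the Mandelbrot--Van Ness integral or the spectral representation of $|x|^{2H}$), invoke Kolmogorov extension, and then upgrade to a H\"older-continuous modification via Kolmogorov--Chentsov using the Gaussian moment identity $E|\tilde W_t^H-\tilde W_s^H|^p=c_p|t-s|^{pH}$.

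The paper, however, does not give a proof at all: it simply cites Nourdin's monograph \cite{Nourdin}. So there is no meaningful comparison of approaches to be made---your sketch \emph{is} essentially the argument one finds in that reference, whereas the paper treats the result as background and defers entirely to the literature.
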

\begin{proof}
A proof can be found in \cite{Nourdin}.
\end{proof}

We call this process the \emph{Fractional Brownian Motion}. For $H=1/2$ this is exactly the standard Brownian Motion. For the rest of the paper, if $H=1/2$, we denote $W_t^{1/2}$ by $W_t$. 

We recall that It\^o integration is necessary to make sense of differential equations of the form $dX_t=f(X_t,t)\,dW_t+g(X_t,t)\,dt$, because in general integrals of the form $\int_{0}^{t}f(X_s,s)dW_s$ are not well defined Riemann integrals. This arises from $W_t$ having infinite variation. However, it was shown in the pioneering work \cite{Young}, that if $\alpha+\beta>1$, the so called Young integral
\[\int_{a}^{b}f(x)dg(x):=\lim_{|\mathcal{P}|\rightarrow 0}\sum_{[u,v]\in\mathcal{P}}f(u)(g(v)-g(u)),\]
exists if $f$ and $g$ are $\alpha$ and $\beta$ H\"older respectively. Here we are using the notation that $\mathcal{P}$ is a partition of $[a,b]$ and $|\mathcal{P}|=\max_{[u,v]\in\mathcal{P}}v-u$. Unfortunately, for $H\leq 1/2$, this means that almost surely the integral $\int_{a}^{b}W_t^HdW_t^H$ does not exist in this Young sense. For $H=1/2$, It\^o integration solves this problem by taking advantage of properties of martingales. In this article we assume $H>1/2$ and are thus able to use this notion of the Young integral to define our differential equations.

\begin{definition}
Let $g(t)$ be $\alpha>1/2$ H\"older continuous. Let $\sigma(t)$ be $C^2$ with bounded derivatives and $\mu(t)$ Lipschitz. We say $x(t)$ solves the differential equation
\[dx(t)=\sigma(x(t))dg(t)+\mu(x(t))dt\]
in the Young sense if
\[x(t)-x(0)=\int_{0}^{t}\sigma(x(u))dg(u)+\int_{0}^{t}\mu(x(u))du,\]
where we assume a valid solution $x(t)$ is $\beta>1-\alpha$ H\"older so that the integrals are well defined Young integrals.
\end{definition}

\begin{theorem}
    Given an initial condition there exist unique solutions to Young differential equations as described above.
\end{theorem}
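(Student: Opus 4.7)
The plan is to prove existence and uniqueness via a Banach fixed-point (contraction mapping) argument on a space of Hölder continuous functions, first locally in time, and then extend to any finite time horizon by concatenation.

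First I would fix $T>0$ (to be chosen small) and work in the Banach space $C^\beta([0,T];\mathbb{R})$ of $\beta$-Hölder functions equipped with the norm $\|x\|_\beta = \|x\|_\infty + [x]_\beta$, where $[x]_\beta$ is the Hölder seminorm, and where $\beta$ is chosen to satisfy $1-\alpha<\beta<\alpha$ (possible since $\alpha>1/2$). Define the solution map
\[
T(x)(t) := x(0) + \int_0^t \sigma(x(u))\,dg(u) + \int_0^t \mu(x(u))\,du,
\]
and consider it restricted to the closed ball $B_R = \{x \in C^\beta : x(0)=x_0,\ \|x-x_0\|_\beta \le R\}$ for a suitable $R$. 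The existence and uniqueness of a solution will follow from showing that, for $T$ small enough, $T$ is a contraction on $B_R$.

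Next I would establish the two key Young-integral estimates. Using the standard sewing-type bound for Young integrals, if $f$ is $\beta$-Hölder and $g$ is $\alpha$-Hölder, then there is a constant $C=C(\alpha,\beta)$ such that
\[
\left|\int_s^t f(u)\,dg(u) - f(s)(g(t)-g(s))\right| \le C[f]_\beta\,[g]_\alpha\,|t-s|^{\alpha+\beta},
\]
which together with $\alpha+\beta>1$ and $\alpha>\beta$ gives $[\int_0^\cdot f\,dg]_\beta \le C'(\|f\|_\infty + [f]_\beta\,T^{\alpha-\beta}\cdot\text{const})$. Combined with the fact that $\sigma\in C^2$ with bounded derivatives implies $\sigma\circ x$ is $\beta$-Hölder with $[\sigma(x)]_\beta \le \|\sigma'\|_\infty [x]_\beta$, I obtain Hölder estimates of the Young integral term in terms of $\|x\|_\beta$. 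The Lipschitz drift term is handled with classical bounds. These estimates let me prove that, for $T$ sufficiently small, $T$ maps $B_R$ into itself.

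For contraction, I would analyze $T(x)-T(y)$ by writing
\[
T(x)(t)-T(y)(t) = \int_0^t (\sigma(x(u))-\sigma(y(u)))\,dg(u) + \int_0^t (\mu(x(u))-\mu(y(u)))\,du,
\]
and use the $C^2$ assumption on $\sigma$ (so that $\sigma(x)-\sigma(y)$ is controlled in $\beta$-Hölder norm by $\|x-y\|_\beta$, through an estimate of the form $[\sigma(x)-\sigma(y)]_\beta \le \|\sigma''\|_\infty(\|x\|_\beta+\|y\|_\beta)\|x-y\|_\infty + \|\sigma'\|_\infty[x-y]_\beta$) together with the Young estimate above to get $\|T(x)-T(y)\|_\beta \le \kappa(T)\|x-y\|_\beta$ with $\kappa(T)\to 0$ as $T\to 0$. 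Choosing $T$ small enough that $\kappa(T)<1$, the Banach fixed-point theorem yields a unique local solution.

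Finally, to globalize, I would observe that the local existence time depends only on bounds of $\sigma$, $\sigma'$, $\sigma''$, the Lipschitz constant of $\mu$, and the Hölder norm of $g$ on subintervals; since all these are uniform in the starting time, the solution can be extended by a standard patching argument to any interval $[0,T_{\max}]$ on which $g$ is defined. The main obstacle is the contraction estimate, because the nonlinear composition $\sigma(x)$ interacts with the Young integral in a way that requires the $C^2$ assumption to control the Hölder seminorm of $\sigma(x)-\sigma(y)$; once this estimate is in hand, the argument follows the classical Picard template.
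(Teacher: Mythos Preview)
The paper does not give its own proof of this statement; it simply cites Nourdin's monograph on fractional Brownian motion. Your sketch is precisely the standard contraction-mapping argument in H\"older spaces that one finds there (and in the related Young/rough-path literature), so your approach is aligned with the intended reference.

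One point worth tightening: in your globalization step you assert that the local existence time depends only on bounds of $\sigma,\sigma',\sigma''$, the Lipschitz constant of $\mu$, and $[g]_\alpha$. But $\sigma$ itself is not assumed bounded (only its derivatives are), so the invariance and contraction constants on $B_R$ actually depend on $R$ through $\|\sigma(x)\|_\infty\le|\sigma(x_0)|+\|\sigma'\|_\infty R$. Consequently the local step size could shrink if the solution grows, and concatenation alone does not obviously reach an arbitrary $T_{\max}$. The standard remedy is to first derive an a~priori linear-growth (Gronwall-type) bound on the solution using the Lipschitz property of $\sigma$ and $\mu$, which rules out finite-time blowup and makes the patching argument go through. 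This is a routine addition, but it is needed to close the global existence claim.
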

\begin{proof}
A proof can be found in \cite{Nourdin}.
\end{proof}

Next, we extend this notion to stochastic processes with almost surely $\alpha>1/2$ sample paths. 
\begin{definition}
Suppose that $G_t$ is almost surely $\alpha>1/2$ H\"older. Then we say that $X_t$ solves the differential equation
\begin{equation*}
dX_t=\sigma(X_t)dG_t+\mu(X_t)dt,
\end{equation*}
in the Young sense if for almost all $\omega\in\Omega$ we have that $X_t(\omega)$ solves the differential equation in the deterministic Young sense.
\end{definition}

\subsection{Fractional log variance model}
\label{sec22}

With the preliminaries in the last subsection, we introduce a fractional log variance model. Let $H>1/2$ and consider the corresponding Fractional Brownian Motion with $W_0^H=0$ almost surely. We note that $W_t^H$ is $\alpha$-H\"older for all $\alpha\in(0,H)$, and since $H>1/2$, we can interpret the following model in a Young sense.
\begin{equation*}
\sigma_t^2 =e^{X_t}, \quad       dX_t=\alpha(m-X_t)dt+\nu dW_t^H.
\end{equation*}

Clearly, $\nu$ is $C^2$ with bounded derivatives and $m-X_t$ is Lipshitz in $X_t$, hence we know unique solutions exist to the above differential equation almost surely.

\begin{lemma}
    The process 
    \[X_t=e^{-\alpha t}(X_0-m)+m+\nu W_t^H-\alpha\int_{0}^{t}e^{-\alpha(t-s)}\nu W_s^Hds,\] 
    solves $ dX_t=\alpha(m-X_t)dt+\nu dW_t^H$, in the Young sense.
\end{lemma}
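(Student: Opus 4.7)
The plan is to recognize the stated formula as the output of variation of constants applied to the Young SDE, followed by a single application of Young integration by parts. Specifically, the natural ansatz for the solution of this linear Young equation is
\[
X_t \;=\; e^{-\alpha t}(X_0 - m) + m + \nu\int_0^t e^{-\alpha(t-s)}\,dW_s^H,
\]
and I would show that the $dW^H$-integral here equals $W_t^H - \alpha\int_0^t e^{-\alpha(t-s)}W_s^H\,ds$, which is precisely the rewriting needed to match the expression in the lemma.

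The $dW_s^H$-integral is a legitimate Young integral because $s\mapsto e^{-\alpha(t-s)}$ is $C^\infty$ (hence Lipschitz, i.e.\ $1$-H\"older) while $s\mapsto W_s^H$ is almost surely $\gamma$-H\"older for every $\gamma<H$, so the H\"older exponents sum to more than $1$ whenever $H>1/2$. Applying Young integration by parts, together with $W_0^H=0$ and $d[e^{-\alpha(t-s)}]=\alpha e^{-\alpha(t-s)}\,ds$, yields
\[
\int_0^t e^{-\alpha(t-s)}\,dW_s^H \;=\; W_t^H - \alpha\int_0^t e^{-\alpha(t-s)}W_s^H\,ds,
\]
where the companion integral has collapsed to an ordinary Riemann integral. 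Substituting this back into the ansatz reproduces the stated formula verbatim.

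To verify that the candidate actually solves the equation in the Young sense, I would check the integral identity $X_t - X_0 = \alpha\int_0^t(m-X_s)\,ds + \nu W_t^H$ by direct substitution: plug the explicit formula for $X_s$ into the drift integral, apply Fubini to the resulting double integral $\int_0^t\!\int_0^s e^{-\alpha(s-u)}W_u^H\,du\,ds$ (valid because the integrand is bounded and continuous on a compact set), and collect terms. A slightly slicker alternative is to apply the Young product rule to $e^{\alpha t}X_t$, compute $d(e^{\alpha t}X_t)=\alpha m e^{\alpha t}\,dt+\nu e^{\alpha t}\,dW_t^H$, integrate from $0$ to $t$, and divide by $e^{\alpha t}$; this recovers the ansatz directly.

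The main technical point, rather than a genuine obstacle, is regularity bookkeeping: one must confirm that the candidate $X_t$ is $\gamma$-H\"older for some $\gamma>1/2$ so that the Young integrals in the verification are themselves admissible. This is immediate, since the Riemann integral term is Lipschitz in $t$, the exponential/constant terms are smooth, and the $\nu W_t^H$ summand carries $\gamma$-H\"older regularity for every $\gamma<H$. Hence the $\beta>1-\alpha$ requirement in the definition of a Young solution is comfortably met, and no ingredient beyond the existence-and-uniqueness theorem already cited is needed.
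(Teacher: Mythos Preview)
Your argument is correct, but it takes a longer path than the paper's own proof. The paper simply differentiates the explicit formula term by term: the smooth pieces $e^{-\alpha t}(X_0-m)+m$ contribute $-\alpha e^{-\alpha t}(X_0-m)\,dt$, the term $\nu W_t^H$ contributes $\nu\,dW_t^H$, and the Riemann integral $-\alpha e^{-\alpha t}\int_0^t e^{\alpha s}\nu W_s^H\,ds$ is differentiated as a product of a smooth function and an antiderivative, contributing $(\alpha^2 I_t-\alpha\nu W_t^H)\,dt$. Collecting these immediately yields $dX_t=\alpha(m-X_t)\,dt+\nu\,dW_t^H$, with no detour through the variation-of-constants representation or Young integration by parts. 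Your route---starting from the ansatz $X_t=e^{-\alpha t}(X_0-m)+m+\nu\int_0^t e^{-\alpha(t-s)}\,dW_s^H$ and then invoking Young integration by parts to land on the stated formula---has the pedagogical advantage of explaining \emph{where} the formula comes from rather than merely checking it, and your regularity bookkeeping (confirming the candidate is $\gamma$-H\"older with $\gamma>1/2$) is a point the paper glosses over. The paper's approach, by contrast, is a two-line verification that never needs to introduce or justify the Young integral $\int_0^t e^{-\alpha(t-s)}\,dW_s^H$ at all, since the lemma's formula involves only a Riemann integral against $ds$.
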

\begin{proof}
    First recall that if $f(t)$ is differentiable, then $d(f(t))=f'(t)dt$. Then it follows easily that: 
    \begin{align*}
        dX_t&=d\left(e^{-\alpha t}(X_0-m)+m+\nu W_t^H-\alpha\int_{0}^{t}e^{-\alpha(t-s)}\nu W_s^Hds\right)dt\\
        &=\alpha\left(m-\left(e^{-\alpha t}(X_0-m)+m+\nu W_t^H-\alpha\int_{0}^{t}e^{-\alpha(t-s)}\nu W_s^H ds\right)\right)dt+\nu dW_t^H\\
        &=\alpha(m-X_t)dt+\nu dW_t^H.
    \end{align*}
\end{proof}

\begin{lemma}
\label{lemma27}
    Let $X_t\in\mathbb{R}$ be a continuous Gaussian process with covariance function $\rho(s,t)$ and $\mu(t) =E[X_t]$. Then $Y_t=\int_{0}^{t}X_sds$ is Gaussian with covariance $\int_{0}^{t}\int_{0}^{s}\rho(x,y)dxdy$ and mean $\int_{0}^{t}\mu(x)dx$.
\end{lemma}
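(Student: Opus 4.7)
The plan is to exploit pathwise continuity of $X$ together with the Cramér-Wold device, reducing the claim to two elementary facts: Riemann sums of $X_s$ are finite linear combinations of Gaussians (hence Gaussian), and an almost-sure limit of Gaussian random variables is Gaussian. First I would fix $\omega$ in the full-measure set on which $s\mapsto X_s(\omega)$ is continuous on $[0,t]$ and approximate $Y_t(\omega)=\int_0^t X_s(\omega)\,ds$ by Riemann sums $S_n=\sum_i X_{s_i^n}(\omega)(s_{i+1}^n-s_i^n)$ associated with partitions $\mathcal{P}_n$ whose mesh tends to zero. Pathwise continuity of $X_\cdot(\omega)$ on the compact interval $[0,t]$ gives $S_n\to Y_t$ almost surely.

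Since each $S_n$ is a finite linear combination of coordinates of the Gaussian process $X$, it is Gaussian; the standard fact that an almost-sure (equivalently, in-probability) limit of Gaussians with an a.s.\ finite limit remains Gaussian then shows $Y_t$ is Gaussian. To upgrade to joint Gaussianity of the whole process $(Y_t)$, I would invoke the Cramér-Wold criterion: for any $0\le t_1<\cdots<t_k$ and $c_1,\ldots,c_k\in\mathbb{R}$, the linear combination $\sum_j c_j Y_{t_j}$ equals $\int_0^{t_k}\bigl(\sum_{j:\,t_j\ge s}c_j\bigr)X_s\,ds$, which is again the a.s.\ limit of Gaussian Riemann sums and hence Gaussian; thus $(Y_{t_1},\ldots,Y_{t_k})$ is jointly Gaussian.

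With Gaussianity in hand, the mean and covariance formulas follow from Fubini's theorem. Using continuity of $\mu$ and $\rho$ on the relevant compact rectangles (inherited from Gaussian sample-path continuity via $L^2$-continuity), I would compute
\begin{align*}
E[Y_t]&=E\!\left[\int_0^t X_s\,ds\right]=\int_0^t\mu(s)\,ds,\\
\mathrm{Cov}(Y_s,Y_t)&=E\!\left[\int_0^s (X_u-\mu(u))\,du\int_0^t (X_v-\mu(v))\,dv\right]=\int_0^t\!\!\int_0^s \rho(x,y)\,dx\,dy.
\end{align*}

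The step I expect to require the most care is the rigorous justification of these Fubini interchanges: one must argue that $\sup_{s\in[0,t]}E[X_s^2]<\infty$, so that $Y_t\in L^2$ and $|\rho|$ is integrable on $[0,s]\times[0,t]$. This is the one place where a genuine input from Gaussianity beyond mere sample-path continuity is needed, most cleanly via a Borell-TIS type bound on $\sup_{s\in[0,t]}|X_s|$; once this uniform $L^2$ control is established, the remaining manipulations are routine.
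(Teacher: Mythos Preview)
Your proposal is correct and shares the same core idea as the paper: approximate $Y_t=\int_0^t X_s\,ds$ by Riemann sums, which are finite linear combinations of the Gaussian vector $(X_{t_1},\dots,X_{t_N})$ and hence Gaussian, and pass to the limit using pathwise continuity. The organizational difference is in how the mean and covariance are obtained. The paper computes the mean and covariance of the Riemann sums \emph{directly} as $\frac{t}{N}\sum_i\mu(t_i)$ and $\frac{st}{N^2}\sum_{i,j}\rho(s_i,t_j)$ and lets these discrete sums converge to the stated integrals, so the limiting Gaussian automatically inherits the right parameters; no Fubini step and no separate integrability argument is needed. You instead first establish Gaussianity and then invoke Fubini to read off the moments, which is cleaner conceptually but forces you to justify $\sup_{s\in[0,t]}E[X_s^2]<\infty$. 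Your observation that sample-path continuity of a Gaussian process yields $L^2$-continuity (hence continuity of $\mu$ and $\rho$, hence boundedness on compacts) already suffices for this; the Borell--TIS bound is more than you need. Your explicit use of Cram\'er--Wold for joint Gaussianity is a point the paper leaves implicit.
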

\begin{proof}
    For each $N$, consider the multivariate Gaussian $X_t^N:=(X_{t_1},\ldots,X_{t_N})$ with mean $\mu_N=(E(X_{t_1}),\ldots, E(X_{t_N})),$ and covariance $\Sigma_N$, where $t_i=\frac{t}{N}i$. A standard theorem of Gaussian random variables says that if $X$ is multivariate Gaussian and $B$ is a linear transformation, then $BX$ is multivariate with mean $B\mu$ and covariance $B\Sigma B^T$. We consider the linear transformation $B_t^N$ defined as $B_t^N(x_1,\cdots,x_N):=\frac{t}{N}\sum_{i=1}^{N}x_i$.
    Thus, $B_t^NX_t^N$ is Gaussian with mean $\frac{t}{N}\sum_{i=1}^{N}\mu(t_i)$. The covariance of $B_t^NX_t^N$ and $B_s^NX_s^N$ is:

    \begin{align*}
  &      E\left[\left(B_s^NX_s^N-\frac{s}{N}\sum_{i=1}^{N}\mu(s_i)\right)\left(B_t^NX_t^N-\frac{t}{N}\sum_{i=1}^{N}\mu(t_i)\right)\right] \\
&=E\left[\left(\frac{s}{N}\sum_{i=1}^{N}X_{s_i}-\mu(t_i)\right)\left(\frac{t}{N}\sum_{i=1}^{N}X_{t_i}-\mu(t_i)\right)\right]\\
        &=\frac{st}{N^2}\sum_{i,j=1}^{N}E[(X_{s_i}-\mu(s_i))(X_{t_j}-\mu(t_j))]=\frac{st}{N^2}\sum_{i,j=1}^{N}\rho(s_i,t_j).
    \end{align*}
Since $X_t$ is almost surely continuous, for almost all $\omega$ we have $B_t^NX_t^N(\omega)\rightarrow\int_{0}^{t}X_s(\omega)ds$ as $N\rightarrow\infty$. Since the mean converges to $\int_{0}^{t}\mu(x)dx$ and the covariance function converges to \\
$\int_{0}^{t}\int_{0}^{s}\rho(x,y)dxdy$, $\int_{0}^{t}X_sds$ is Gaussian as well, with the corresponding mean and covariance function.
\end{proof}

It is immediate from Lemma \ref{lemma27} that $-\alpha\int_{0}^{t}e^{-\alpha (t-s)}\nu W_s^H ds$ is Gaussian with mean 0 and covariance 
\[(\alpha\nu)^2\int_{0}^{t}\int_{0}^{s}e^{-\alpha((s-x)+(t-y))}\rho^H(x,y)dxdy.\]
We also have that $e^{-\alpha t}(X_0-m)+m+\nu W_t^H$ is Gaussian with mean $e^{-\alpha t}(X_0-m)+m$, and covariance $\rho^H(s,t)$. If we are able to compute the covariance between $e^{-\alpha t}(X_0-m)+m+\nu W_t^H$ and $-\alpha\int_{0}^{t}e^{-\alpha (t-s)}\nu W_s^H ds$, then we will have a complete description of the Gaussian process $X_t$. For this we need the following lemma.
\color{black}

\begin{lemma}
    Let $X_t\in\mathbb{R}$ be a continuous Gaussian process with mean $\mu(t)$ covariance function $\rho(s,t)$. Then $E\left[X_s\int_{0}^{t}X_rdr\right]=\int_{0}^{t}\rho(s,x)-\mu(s)\mu(x)dx$.
\end{lemma}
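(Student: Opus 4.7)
The plan is to mirror the Riemann-sum argument from the proof of Lemma~\ref{lemma27}. Fix $s \in [0,t]$ and take the uniform partition $t_i := (t/N)i$ of $[0,t]$. I would first introduce the approximating products
\[Y_N := X_s \cdot \frac{t}{N}\sum_{i=1}^{N} X_{t_i} = \frac{t}{N}\sum_{i=1}^{N} X_s X_{t_i},\]
and note that, by almost-sure continuity of the sample paths of $X$, $Y_N \to X_s \int_0^t X_r\,dr$ almost surely as $N \to \infty$.

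I would then compute $E[Y_N]$ term by term. Directly from the definition of $\rho$ one has $E[X_s X_{t_i}] = \rho(s,t_i) + \mu(s)\mu(t_i)$, so
\[E[Y_N] = \frac{t}{N}\sum_{i=1}^{N} \bigl(\rho(s,t_i) + \mu(s)\mu(t_i)\bigr),\]
which, by continuity of $\rho(s,\cdot)$ and $\mu$ on $[0,t]$, converges to $\int_0^t \bigl(\rho(s,x) + \mu(s)\mu(x)\bigr)\,dx$ in the usual Riemann sense. Matching against the stated formula then amounts to a bookkeeping check on the sign convention.

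The only subtle step is the exchange $\lim_N E[Y_N] = E[\lim_N Y_N]$. The cleanest route is Fubini's theorem: continuity of $r \mapsto E[X_r^2] = \rho(r,r) + \mu(r)^2$ on the compact interval $[0,t]$ gives a uniform bound, and Cauchy--Schwarz then yields $\int_0^t E[|X_s X_r|]\,dr < \infty$. Fubini therefore permits
\[E\!\left[X_s \int_0^t X_r\,dr\right] = \int_0^t E[X_s X_r]\,dr,\]
after which one substitutes $E[X_s X_r] = \rho(s,r) + \mu(s)\mu(r)$. Alternatively, since $Y_N$ is a quadratic form in jointly Gaussian variables, one can bound $E[Y_N^2]$ uniformly in $N$ and invoke Vitali's convergence theorem. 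Either way, the main (and rather mild) obstacle is just this interchange of limit and expectation; once it is justified, the identity drops out in one line.
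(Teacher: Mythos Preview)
Your approach mirrors the paper's exactly: Riemann-sum approximation followed by passage to the limit, with the paper compressing your Fubini/Vitali interchange argument into the single phrase ``By continuity we get the result.'' You are also right to flag the sign: the paper's statement and proof both write $\rho(s,x)-\mu(s)\mu(x)$ where $E[X_sX_r]=\rho(s,r)+\mu(s)\mu(r)$ is what actually holds; this slip is harmless downstream since the lemma is only applied (in Theorem~\ref{theorem29}) to the centered process $e^{\alpha t}\nu W_t^H$.
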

\begin{proof}
    Let
    \[B_t^N(x_1,\ldots,x_N):=\frac{t}{N}\sum_{i=1}^{N}x_i.\]
    We now examine $X_sB_t^NX_t^N$ where $X_t^N:=(X_{t_1},\ldots,X_{t_N})$ and $t_i=\frac{t}{N}i$. From the proof of the previous lemma we know $X_sB_t^NX_t^N\rightarrow X_s\int_{0}^{t}X_rdr$ almost surely. We also have
    \begin{align*}
        E\left[X_s\frac{t}{N}\sum_{i=1}^{N}X_{t_i}\right]=\frac{t}{N}\sum_{i=1}^{N}E[X_sX_{t_i}] =\frac{t}{N}\sum_{i=1}^{N}\rho(s,t_i)-\mu(s)\mu(t_i) \rightarrow\int_{0}^{t}\rho(s,x)-\mu(s)\mu(x)dx.
    \end{align*}
    By continuity we get the result.
\end{proof}

\begin{theorem}
\label{theorem29}
 The processs   $X_t=e^{-\alpha t}(X_0-m)+m+\nu W_t^H-\alpha \int_{0}^{t}e^{-\alpha(t-s)}\nu W_s^H ds$, is a Gaussian process with mean $e^{-\alpha t}(X_0-m)+m$ and covariance function
\begin{align*}
\rho(s,t) & =\nu^2\left(\rho^H(s,t)-\alpha\left( \int_{0}^{t}e^{-\alpha(t-x)}\rho^H(s,x)dx+\int_{0}^{s}e^{-\alpha(s-x)}\rho^H(t,x)dx\right) \right. \\
& \left. +  \alpha^2\int_{0}^{t}\int_{0}^{s}e^{-\alpha((s-x)+(t-y))}\rho^H(x,y)dxdy\right).
 \end{align*}
\end{theorem}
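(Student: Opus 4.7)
My plan is to split $X_t = M(t) + R_t$, where $M(t) = e^{-\alpha t}(X_0-m)+m$ is deterministic and
\[
R_t = \nu W_t^H - \alpha \int_0^t e^{-\alpha(t-s)} \nu W_s^H\, ds,
\]
and to attack Gaussianity, mean, and covariance in that order. For joint Gaussianity of $(X_{s_1},\ldots,X_{s_k})$, I would approximate each $J_{s_j} := \int_0^{s_j} e^{-\alpha(s_j-x)} \nu W_x^H\,dx$ by the Riemann sums used in Lemma \ref{lemma27}, generalized trivially to admit a bounded weight factor $e^{-\alpha(s_j-\cdot)}$ inside the operator $B_t^N$. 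Each approximating vector is then a linear transformation of the jointly Gaussian vector $(W_{t_1}^H,\ldots,W_{t_N}^H,W_{s_1}^H,\ldots,W_{s_k}^H)$, so it is Gaussian, and the almost-sure convergence of the Riemann sums (already established in the proof of Lemma \ref{lemma27}) preserves Gaussianity in the limit. The mean calculation is immediate from $E[W_t^H]=0$, yielding $E[R_t]=0$ and hence $E[X_t]=M(t)$.

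For the covariance, I would write $R_s = Z_s - \alpha J_s$ with $Z_s = \nu W_s^H$, and expand
\[
\mathrm{Cov}(X_s,X_t) = \mathrm{Cov}(Z_s,Z_t) - \alpha\,\mathrm{Cov}(Z_s,J_t) - \alpha\,\mathrm{Cov}(J_s,Z_t) + \alpha^2\,\mathrm{Cov}(J_s,J_t).
\]
The first term is exactly $\nu^2 \rho^H(s,t)$ by definition of fractional Brownian covariance. For the two mixed terms, I plan to invoke the lemma immediately preceding the theorem (applied to the centered Gaussian process $\nu W^H$, with the bounded weight $e^{-\alpha(t-\cdot)}$ absorbed into the Riemann operator), which produces the single integrals $-\alpha\nu^2\int_0^t e^{-\alpha(t-x)}\rho^H(s,x)\,dx$ and its symmetric counterpart with $s$ and $t$ swapped. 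The last term, $\mathrm{Cov}(J_s,J_t)$, requires a second application of the Riemann-sum technique: approximate both integrals by finite sums, exchange expectation with the finite sums by linearity of $E$, and then pass to the limit in both partitions. Collecting the four contributions and factoring out $\nu^2$ yields exactly the claimed formula.

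The main technical obstacle is justifying the passage to the limit inside expectation in $\mathrm{Cov}(J_s,J_t)$, where both the inner and outer Riemann approximations must converge simultaneously in $L^1$. I would handle this by noting that each approximating sum is bounded pointwise by $C\sup_{u\le s\vee t}|W_u^H|$ for a deterministic constant $C=C(\alpha,\nu,s,t)$, so the product of the two sums is dominated by $C^2\sup_{u\le s\vee t}|W_u^H|^2$, which is integrable since $\sup_{u\le s\vee t}|W_u^H| \in L^2$. Combined with almost-sure convergence coming from continuity of $W^H$, the dominated convergence theorem then delivers $L^1$ convergence of the products, and the covariance collapses to the stated double integral; all other exchanges of limit and expectation are similarly controlled by this same dominating random variable.
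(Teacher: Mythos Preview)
Your proposal is correct and follows the same overall architecture as the paper: split off the deterministic part, expand the covariance by bilinearity into four terms, and evaluate each via the Riemann-sum lemmas preceding the theorem. There is one small but instructive difference in execution. You handle the weighted integrals $J_t=\int_0^t e^{-\alpha(t-x)}\nu W_x^H\,dx$ by building the weight $e^{-\alpha(t-\cdot)}$ directly into the Riemann operator, which means you are effectively re-proving weighted versions of Lemma~\ref{lemma27} and the subsequent lemma on the fly. The paper instead factors $J_t = e^{-\alpha t}\int_0^t \bigl(e^{\alpha x}\nu W_x^H\bigr)\,dx$, recognizes that $e^{\alpha x}\nu W_x^H$ is itself a centered continuous Gaussian process with covariance $e^{\alpha(x+y)}\nu^2\rho^H(x,y)$, and then applies the two lemmas exactly as stated to this new process. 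This factoring trick avoids any generalization of the lemmas and makes the computation slightly cleaner. Conversely, your version adds something the paper omits: an explicit domination argument (via $\sup_{u\le s\vee t}|W_u^H|\in L^2$) to justify the limit--expectation exchanges, whereas the paper simply writes ``by continuity we get the result.'' Both routes are sound; the paper's is tidier at the level of lemma-reuse, yours is more careful about the analysis underneath.
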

\begin{proof}
    First we note that $e^{\alpha t}\nu W_t^H$ is a Gaussian process with mean 0 and covariance $\rho(s,t)=e^{\alpha(s+t)}\nu^2\rho^H(s,t)$.
    We have
    \begin{align*}
        &\text{Cov}\left(e^{-\alpha s}(X_0-m)+m+\nu W_s^H,-\alpha\int_{0}^{t}e^{-\alpha(t-r)}\nu W_r^H dr\right)\\
        &=E\left[-\alpha e^{-\alpha t}\nu W_s^H\int_{0}^{t}e^{\alpha r}\nu W_r^H dr\right]\\
        &=-\alpha e^{-\alpha(t+s)}E\left[e^{\alpha s}\nu W_s^H\int_{0}^{t}e^{\alpha r}\nu W_r^H dr\right]\\
        &=-\alpha e^{-\alpha(t+s)}\int_{0}^{t}e^{\alpha(s+x)}\nu^2\rho^H(s,x)dx =-\alpha\int_{0}^{t}e^{-\alpha(t-x)}\nu^2\rho^H(s,x)dx.
    \end{align*}
    This shows that the sum of the two Gaussian random variables $Y_t:=(e^{-\alpha t}(X_0-m)+m+\nu W_t^H)-\alpha\int_{0}^{t}e^{-\alpha(t-s)}\nu W_s^H ds$,
    is Gaussian with mean $e^{-\alpha t}(X_0-m)+m$ and that
    \begin{align*}
        &\text{Cov}(Y_s,Y_t)\\
        &=\text{Cov}(e^{-\alpha s}(X_0-m)+m+\nu W_s^H,e^{-\alpha t}(X_0-m)+m+\nu W_t^H)\\
        &\quad+\text{Cov}\left(e^{-\alpha s}(X_0-m)+m+\nu W_s^H,-\alpha\int_{0}^{t}e^{-\alpha(t-r)}\nu W_r^H dr\right)\\
        &\quad+\text{Cov}\left(-\alpha\int_{0}^{s}e^{-\alpha(s-r)}\nu W_r^H dr,e^{-\alpha t}(X_0-m)+m+\nu W_t^H\right) \\
&\quad+\text{Cov}\left(-\alpha\int_{0}^{s}e^{-\alpha(s-r)}\nu W_r^H dr,-\alpha\int_{0}^{t}e^{-\alpha(t-r)}\nu W_r^H dr\right)\\
        &=\nu^2\rho^{H}(s,t)-\alpha\int_{0}^{t}e^{-\alpha(t-x)}\nu^2\rho^H(s,x)dx\\
        &\quad-\alpha \int_{0}^{s}e^{-\alpha(s-x)}\nu^2\rho^H(t,x)dx+ (\alpha\nu)^2\int_{0}^{t}\int_{0}^{s}e^{-\alpha((s-x)+(t-x))}\rho^H(x,y)dxdy.
    \end{align*}
    This proves the result.
\end{proof}

Using Theorem \ref{theorem29}, we find that $X_t$ is normally distributed with mean $e^{-\alpha t}(X_0-m)+m$ and variance $\rho(t,t)$. Consequently, $E[e^{X_t}]=e^{e^{-\alpha t}(X_0-m)+m+\frac{1}{2}\rho(t,t)}$. This implies the following theorem: 
\begin{theorem}
\label{t1}
    Let $W_t^H$ be centered fractional Brownian motion with $H>1/2$, and consider the following fractional Ornstein-Uhlenbeck process:
    \[\quad dX_t=\alpha(m-X_t)dt+\nu dW_t^H.\]
    Then
\begin{equation}
\label{volWH}
E\left[\frac{1}{T}\int_{0}^{T}e^{X_t}dt\right]=\frac{1}{T}\int_{0}^{T}e^{e^{-\alpha t}(X_0-m)+m+\frac{1}{2}v(t)}dt,
\end{equation}
    where
\begin{equation}
\label{vt}
v(t)=\nu^2\left(\rho^H(t,t)-2\alpha \int_{0}^{t}e^{-\alpha(t-x)}\rho^H(t,x)dx+\alpha^2\int_{0}^{t}\int_{0}^{t}e^{-\alpha((t-x)+(t-y))}\rho^H(x,y)dxdy\right).
\end{equation}
\end{theorem}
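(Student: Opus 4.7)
The plan is to read off the result directly from Theorem \ref{theorem29} together with the standard lognormal moment formula, after justifying an interchange of expectation and integration. First, I would identify the explicit closed form of $X_t$ furnished by the lemma preceding Theorem \ref{theorem29}, so that Theorem \ref{theorem29} applies and gives that $X_t$ is a Gaussian process with mean $\mu(t)=e^{-\alpha t}(X_0-m)+m$ and covariance function $\rho(s,t)$ as displayed there.

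Next, I would specialize the covariance formula to $s=t$. Symmetry of $\rho^H$ collapses the two single-integral terms into a single factor of $2\alpha\int_0^t e^{-\alpha(t-x)}\rho^H(t,x)\,dx$, yielding exactly $\rho(t,t)=v(t)$ with $v(t)$ as defined in \eqref{vt}. Since for each fixed $t$ the random variable $X_t$ is normally distributed with mean $\mu(t)$ and variance $v(t)$, the random variable $e^{X_t}$ is lognormal and the standard moment generating identity gives
\[
E[e^{X_t}] \;=\; \exp\!\left(\mu(t)+\tfrac{1}{2}v(t)\right) \;=\; \exp\!\left(e^{-\alpha t}(X_0-m)+m+\tfrac{1}{2}v(t)\right).
\]

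Finally, I would justify the interchange
\[
E\!\left[\frac{1}{T}\int_0^T e^{X_t}\,dt\right] \;=\; \frac{1}{T}\int_0^T E[e^{X_t}]\,dt,
\]
by Tonelli's theorem, which applies with no integrability assumption because $e^{X_t}\geq 0$. Substituting the expression for $E[e^{X_t}]$ obtained in the previous step then gives \eqref{volWH}.

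The argument is essentially a direct corollary of Theorem \ref{theorem29}, so there is no serious obstacle; the only subtlety worth mentioning is the continuity in $t$ of $\mu(t)$ and $v(t)$ (inherited from the continuity of $\rho^H$ and the dominated convergence theorem applied to the defining integrals), which ensures the right-hand side of \eqref{volWH} is itself a well-defined Riemann integral of a continuous function.
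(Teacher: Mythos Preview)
Your proposal is correct and follows essentially the same approach as the paper: invoke Theorem~\ref{theorem29} to identify $X_t$ as Gaussian with mean $e^{-\alpha t}(X_0-m)+m$ and variance $v(t)=\rho(t,t)$, apply the lognormal moment formula, and then interchange expectation with the time integral. Your version is slightly more careful in spelling out the justification for the interchange (Tonelli) and the reduction of $\rho(t,t)$ to \eqref{vt}, but the argument is the same.
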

\begin{proof}
    Since $X_t$ is Gaussian with mean $e^{-\alpha t}(X_0-m)+m$ and covariance $\rho(s,t)$, we have $E[e^{X_t}]=e^{e^{-\alpha t}(X_0-m)+m+\frac{1}{2}v(t)}$. Consequently,
    \begin{align*}
        E\left[\frac{1}{T}\int_{0}^{T}e^{X_t}dt\right]=\frac{1}{T}\int_{0}^{T}E\left[e^{X_t}\right]dt=\frac{1}{T}\int_{0}^{T}e^{e^{-\alpha t}(X_0-m)+m+\frac{1}{2}v(t)}dt.
    \end{align*}
\end{proof}

\subsection{Variance and volatility swaps}
\label{sec23}

In this paper, one of the primary goals is to analyze the variance and volatility swaps with respect to a new stochastic volatility model. For that reason, in this subsection, we provide preliminaries on the variance and volatility swaps. We denote the realized volatility by $\sigma_R(S)$. The subscript $R$ denotes the observed or realized volatility for some given underlying asset $S$. When the underlying asset is clear from the context, realized volatility is denoted simply as $\sigma_R$. 
If $\sigma_t$, $0 \leq t \leq T$ is the stochastic volatility for a given underlying asset $S$, then the realized volatility $\sigma_R$ over the life time of a contract is given by 
\begin{equation}
\label{1a}
\sigma_R=\sqrt{\frac{1}{T}\int_0^T\sigma_t^2dt}.
\end{equation}
Usually $\sigma_R$ is quoted in annual terms. The realized variance is $\sigma_R^2$ over the life of the contract is defined as 
\begin{equation}
\label{11}
\sigma_R^2=\frac{1}{T}\int_0^T\sigma_t^2\,dt.
\end{equation}
The following definitions can be found in \cite{ISGSH, Semere2}. 
\begin{definition}
A volatility swap is a  forward contract on the future realized volatility of a given underlying asset. The payoff of the volatility swap at the maturity $T$ is given by $N (\sigma_R -K_\text{Vol} )$,
where $K_\text{Vol}$  is the annualized volatility delivery price or exercise price, and $N$ is the notional amount of the swap in dollar per annualized volatility point.
\end{definition}
\begin{definition}
A variance swap is  a forward contract on realized variance, the square of the future realized volatility.  The payoff of variance swap at the maturity $T$ is given by 
$N(\sigma_R^2-\text{K}_{\text{Var}})$,
where $K_\text{Var}$ is the annualized delivery price or exercise price of the variance swap, and $N$ is the notional amount of the dollars per annualized volatility point squared. 
\end{definition}
Clearly, with respect to a risk-neutral measure $\mathbb{Q}$, the arbitrage-free price of the volatility swap is given by $P_\text{Vol}= E^{\mathbb{Q}}\left[e^{-rT}(\sigma_R-K_\text{Vol})\right]$, where $r$ is the risk-free interest rate. Similarly, $P_\text{Var}= E^{\mathbb{Q}}\left[e^{-rT}(\sigma_R^2-K_\text{Var})\right]$ provides the arbitrage free price of the variance swap. As observed in \cite{ISGSH, Semere2}, it turns out that  the calculation of $E^{\mathbb{Q}}(\sigma_R)$ is more involved than that of $E^{\mathbb{Q}}(\sigma_R^2)$. 
A useful estimate approximation regarding the expected value of realized volatility is obtained in \cite{anatobook} and is given by 
\begin{equation}
\label{fargo}
E^{\mathbb{Q}}[\sigma_R]= \displaystyle{E^{\mathbb{Q}}[\sqrt{\sigma_R^2}]\approx \sqrt{E^{\mathbb{Q}}[\sigma_R^2]}-\frac{\text{Var}^{\mathbb{Q}}(\sigma_R^2)}{8(E^{\mathbb{Q}}[\sigma_R^2])^{3/2}}}.
\end{equation}
However, this is \emph{not} a good estimate for the situation where the underlying asset undergoes big fluctuations. In the following sections, we derive an analytic expression for $E^{\mathbb{Q}}[\sigma_R^2]$ that can in general be implemented.

In the following, we summarize the calculations for variance swap for two well-known models- the Heston model, and the Barndorff-Nielsen and Shephard model.  For the Heston model the stock and volatility dynamics, with respect to a risk-neutral probability measure $\mathbb{Q}$, are given by 
\begin{equation*}
\frac{dS_t}{S_t}= r\,dt+ \sigma_t dW_t^1,
\end{equation*}
\begin{equation*}
d\sigma^2_t= k(\theta^2- \sigma^2_t)\,dt + \gamma \sigma_t\, dW_t^2,
\end{equation*}
where $r$ is the deterministic interest rate, $\sigma_0$ and $\theta$ are short and long volatility, $k>0$ is a reversion speed, $\gamma>0$ is a parameter, and $W_t^1$ and $W_t^2$ are independent standard Brownian motions. In this case the expected value and variance of $\sigma_R^2$ can be computed as
\begin{equation}
\label{hestonswap1}
E^\mathbb{Q}(\sigma_R^2)=\frac{1-e^{-kT}}{kT}(\sigma_0^2-\theta^2)+\theta^2,
\end{equation}
and
\begin{align}
\label{hestonswap2}
\text{Var}^\mathbb{Q}(\sigma_R^2) = \frac{\gamma^2e^{-2kT}}{2k^3T^2}[(2e^{2kT}-4e^{kT}kT-2)(\sigma_0^2-\theta^2)+(2e^{2kT}kT-3e^{2kT}+4e^{kT}-1)\theta^2].
\end{align}
Using \eqref{hestonswap1} in \eqref{11}, the value $E^{\mathbb{Q}}(\sigma_R^2)$ can be computed. Similarly, using \eqref{hestonswap1} and \eqref{hestonswap2} in \eqref{fargo}, it is possible to find the approximate value for $E^{\mathbb{Q}}(\sigma_R)$.

On the other hand, for the Barndorff-Nielsen and Shephard  model, it is considered that a riskless asset with constant return rate $r$ and a stock are traded up to a fixed horizon date $T$. It is assumed that the stock price ($S_t$) is defined on some filtered probability space $(\Omega, \mathcal{F}, (\mathcal{F}_t)_{0 \leq t \leq T}, \mathbb{P})$ and is given by:
\begin{equation}
\label{semere0}
S_t= S_0 \exp (X_t),
\end{equation}
\begin{equation}
\label{semere1}
dX_t = (\mu + \beta \sigma_t ^2 )\,dt + \sigma_t\, dW_t + \rho \,dZ_{\lambda t}, 
\end{equation}
\begin{equation}
\label{semere2}
d\sigma_t ^2 = -\lambda \sigma_t^2 \,dt + dZ_{\lambda t}, \quad \sigma_0^2 >0,
\end{equation}
where the parameters $\mu, \beta, \rho, \lambda \in \mathbb{R}$ with $\lambda >0$ and $\rho \leq 0$. $W= (W_t)$ is a Brownian motion and the process $Z= (Z_{\lambda t})$ is a subordinator. Also $W$ and $Z$ are assumed to be independent and $(\mathcal{F}_t)$ is assumed to be the usual augmentation of the filtration generated by the pair $(W, Z)$. 
It is shown in \cite{NV} that there exists a class of equivalent martingale measure ($\mathbb{Q}$) under which the structure of the BN-S model is preserved. A similar structure-preserving class of equivalent martingale measure in a general setting is also discussed in \cite{ijtaf}. With respect to this measure $\mathbb{Q}$, the following theorem is proved in \cite{ISGSH}.
\begin{theorem}
\label{arbitr1}
The arbitrage free price of the variance swap, with respect to a risk-neutral measure $\mathbb{Q}$, is given by
$\text{P}_{\text{Var}}=e^{-rT}\left[\frac{1}{T}\left(\lambda^{-1}\left(1-e^{-\lambda T}\right)\left(\sigma_0^2-\kappa_1\right)+\kappa_1 T\right)+ \rho^2 \lambda \kappa_2-\text{K}_{\text{Var}}\right]$,
where $\kappa_1$ and $\kappa_2$ are the first cumulant (i.e., the expected value) and  the second cumulant (i.e., the variance) of $Z_1$ respectively.
\end{theorem}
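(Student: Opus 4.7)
The goal is to compute $E^{\mathbb{Q}}[\sigma_R^2]$ under the BN-S dynamics and then invoke the risk-neutral pricing relation $P_{\text{Var}} = e^{-rT}(E^{\mathbb{Q}}[\sigma_R^2] - K_{\text{Var}})$. In this jump-diffusion setting, the realized variance entering the swap payoff is most naturally identified with $T^{-1}[X]_T$, which under \eqref{semere1} splits additively into an integrated-variance piece $T^{-1}\int_0^T \sigma_t^2\,dt$ coming from the Brownian part and a jump piece $\rho^2 T^{-1}\sum_{0 < s \leq T}(\Delta Z_{\lambda s})^2$ coming from $\rho\,dZ_{\lambda t}$. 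I would treat the two contributions separately.

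For the first contribution I would solve the Ornstein-Uhlenbeck equation \eqref{semere2} by variation of parameters to obtain
\[\sigma_t^2 = e^{-\lambda t}\sigma_0^2 + \int_0^t e^{-\lambda(t-s)}\,dZ_{\lambda s}.\]
Taking $\mathbb{Q}$-expectation term by term, using $E^{\mathbb{Q}}[Z_u] = u\,\kappa_1$ for the mean of the subordinator (so that effectively $E^{\mathbb{Q}}[dZ_{\lambda s}] = \lambda \kappa_1\,ds$), I obtain
\[E^{\mathbb{Q}}[\sigma_t^2] = e^{-\lambda t}\sigma_0^2 + \kappa_1\bigl(1 - e^{-\lambda t}\bigr).\]
An application of Fubini's theorem to interchange expectation and time-integration then yields
\[\frac{1}{T}\int_0^T E^{\mathbb{Q}}[\sigma_t^2]\,dt = \kappa_1 + \frac{1 - e^{-\lambda T}}{\lambda T}\bigl(\sigma_0^2 - \kappa_1\bigr),\]
which already reproduces the integrated part inside the bracket of the claim.

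For the jump contribution I would use that $Z$ is a pure-jump subordinator, so the sum $\sum_{0 < s \leq \lambda T}(\Delta Z_s)^2$ has compensator $\lambda T \int x^2\,\nu(dx)$, where $\nu$ is the L\'evy measure of $Z$. Identifying $\int x^2\,\nu(dx)$ with the second cumulant $\kappa_2 = \text{Var}(Z_1)$ of the (drift-free) subordinator via the L\'evy-Khintchine representation, I conclude
\[\frac{\rho^2}{T}\,E^{\mathbb{Q}}\!\left[\sum_{0 < s \leq T}(\Delta Z_{\lambda s})^2\right] = \rho^2 \lambda \kappa_2.\]
Adding the two contributions, subtracting $K_{\text{Var}}$, and discounting by $e^{-rT}$ yields the stated formula.

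The main obstacle is the first step: justifying that, for the BN-S model, the realized variance entering the swap payoff is the full quadratic variation $T^{-1}[X]_T$ rather than only the integrated variance $T^{-1}\int_0^T \sigma_t^2\,dt$ of \eqref{11}; this is precisely what produces the extra $\rho^2 \lambda \kappa_2$ term and requires a careful interpretation of the contract and of the underlying filtration generated by $(W,Z)$. The rest is routine but demands care in manipulating the stochastic integral against a subordinator and in invoking the L\'evy-Khintchine identity to relate expected squared jumps to $\kappa_2$.
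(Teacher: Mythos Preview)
The paper does not actually contain a proof of this theorem; it is quoted from \cite{ISGSH} with the sentence ``the following theorem is proved in \cite{ISGSH}.'' So there is no in-paper argument to compare against directly.

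That said, your approach is correct and is the standard one. Your decomposition of $[X]_T$ into the Brownian integrated-variance piece and the jump piece, the variation-of-parameters solution of the OU equation, the Fubini step, and the identification $E^{\mathbb{Q}}\bigl[\sum_{0<s\le\lambda T}(\Delta Z_s)^2\bigr]=\lambda T\int x^2\,\nu(dx)=\lambda T\,\kappa_2$ are all sound and lead exactly to the claimed expression. You are also right that the only substantive point is the interpretation of $\sigma_R^2$ as $T^{-1}[X]_T$ rather than as $T^{-1}\int_0^T\sigma_t^2\,dt$; the presence of the $\rho^2\lambda\kappa_2$ term forces this reading, and it is indeed the convention adopted in \cite{ISGSH}. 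It is worth noting, as a point of consistency, that when the present paper later computes $E^{\mathbb{Q}}[\sigma_R^2]$ for the fractional BN-S model (Theorem~\ref{t2} and equation~\eqref{se44}), it reverts to the narrower definition \eqref{11} and no analogous jump term appears; your remark about the need for ``careful interpretation of the contract'' is therefore well placed.
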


\section{The model and pricing of variance and volatility swaps}
\label{sec3}

In this section, we develop the fractional BN-S stochastic volatility model. We analyze variance and volatility swaps with respect to this model. 

\subsection{Fractional BN-S stochastic volatility model}
\label{sec31}

Here, we propose a stochastic volatility model that will be analyzed and implemented for the rest of this paper. We consider a frictionless financial market where a riskless asset with constant return rate $r$ and a stock are traded up to a fixed horizon date $T$. We assume that the derivative price $S= (S_t)_{t \geq 0}$, under some risk-neutral measure $\mathbb{Q}$ on some filtered probability space $(\Omega, \mathcal{F}, (\mathcal{F}_t)_{0 \leq t \leq T}, \mathbb{Q})$ is given by:

\begin{equation}
\label{se500}
S_t= S_0 \exp (Y_t),
\end{equation}
\begin{equation}
\label{se501}
dY_t = b_t \,dt + \sigma_t \, dW_t + \rho \,dZ_{ t}, \quad b_t= (c_1-c_2 \sigma_t^2),
\end{equation}
\begin{equation}
\label{se502}
  d\sigma_t^2= \lambda (e^{X_t}-\sigma_t^2)\,dt+ e^{X_t}\,dX_t+ dZ_t, 
\end{equation}
where 
\begin{equation}
\label{se503}
dX_t=\alpha(m-X_t)dt+\nu dW_t^H, \quad Z_0=0,
\end{equation}
and the parameters $c_1, c_2,, \rho  \in \mathbb{R}$ with $\rho \leq 0$ and $c>0$. In the above expressions $W_t$ is a standard Brownian motion, $W_t^H$ is a fractional Brownian motion with $H > \frac{1}{2}$, and the process $Z_t$ is a subordinator. In addition, $W_t$,  $W_t^H$, and $Z_t$ are assumed to be mutually independent processes. We call \eqref{se500}, \eqref{se501}, \eqref{se502}, and \eqref{se503} the \emph{fractional Barndorff-Nielsen and Shephard  (BN-S) stochastic volatility model}. Since $H>1/2$ is assumed, this model is focused on long-range dependence. Thus, this is a trending stochastic volatility model. From \eqref{se502} we observe that
\begin{align*}
e^{\lambda t} (d\sigma_t^2 + \lambda \sigma_t^2\,dt) & =  \lambda e^{\lambda t+X_t}\,dt+ e^{\lambda t +X_t}\,dX_t+ e^{\lambda t}dZ_t \\
d(e^{\lambda t} \sigma_t^2) &= d(e^{\lambda t +X_t})+ e^{\lambda t}dZ_t.
\end{align*}
Therefore
\begin{equation}
\label{sigmat}
\sigma_t^2= m_t + e^{X_t} + \int_0^t e^{-\lambda (t-s)}\,dZ_s, \quad \text{where} \quad m_t=e^{-\lambda t}(\sigma_0^2-e^{X_0}).
\end{equation}

\begin{theorem}
\label{t2}
For the \emph{fractional BN-S stochastic volatility model} 
\begin{equation}
\label{n2}
E^{\mathbb{Q}}(\sigma_R^2)= \frac{v_1(T)}{T}+ \frac{1}{T}\int_{0}^{T}e^{e^{-\alpha t}(X_0-m)+m+\frac{1}{2}v(t)}\,dt,
\end{equation}
where $v(t)$ is given by \eqref{vt}, and
\begin{equation}
\label{v1}
v_1(T)= \frac{1}{\lambda}\left((1-e^{-\lambda T})(\sigma_0^2-e^{X_0})+ \left(T-\lambda^{-1}(1-e^{-\lambda T})\right)\kappa_1\right),
\end{equation}
where  $\kappa_1$ is the first cumulant of $Z_1$.
\end{theorem}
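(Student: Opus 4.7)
The plan is to exploit the explicit representation of $\sigma_t^2$ in \eqref{sigmat} and then take expectation term by term. Since $\sigma_R^2 = \frac{1}{T}\int_0^T \sigma_t^2\,dt$, Fubini--Tonelli (non-negativity of the integrand) gives
\begin{equation*}
E^{\mathbb{Q}}[\sigma_R^2] = \frac{1}{T}\int_0^T E^{\mathbb{Q}}\!\left[m_t + e^{X_t} + \int_0^t e^{-\lambda(t-s)}\,dZ_s\right]dt,
\end{equation*}
which by linearity splits into a deterministic piece coming from $m_t$, a fractional Ornstein--Uhlenbeck piece, and a subordinator piece.

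The fractional piece is handled directly by Theorem \ref{t1}, which supplies
\begin{equation*}
\frac{1}{T}\int_0^T E^{\mathbb{Q}}[e^{X_t}]\,dt = \frac{1}{T}\int_0^T e^{e^{-\alpha t}(X_0-m)+m+\frac{1}{2}v(t)}\,dt,
\end{equation*}
with $v(t)$ as in \eqref{vt}. For the subordinator piece I would use that $Z_t$ is a L\'evy process with $E^{\mathbb{Q}}[Z_1]=\kappa_1$, so its compensator is $\kappa_1\,dt$, and hence for any bounded deterministic $f$ one has $E^{\mathbb{Q}}\!\left[\int_0^t f(s)\,dZ_s\right]=\kappa_1\int_0^t f(s)\,ds$. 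Applied to $f(s)=e^{-\lambda(t-s)}$, this produces $\kappa_1\lambda^{-1}(1-e^{-\lambda t})$. The independence of $W_t$, $W_t^H$ and $Z_t$ is not actually needed at this stage, since only linearity of expectation is invoked.

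It then remains to integrate the deterministic term $m_t=e^{-\lambda t}(\sigma_0^2-e^{X_0})$ and the subordinator contribution over $[0,T]$, yielding $\lambda^{-1}(1-e^{-\lambda T})(\sigma_0^2-e^{X_0})$ and $\kappa_1\lambda^{-1}\bigl(T-\lambda^{-1}(1-e^{-\lambda T})\bigr)$ respectively. Adding these two gives exactly $v_1(T)$ as defined in \eqref{v1}; dividing by $T$ and combining with the fractional piece produces \eqref{n2}. The only step with any subtlety is the compensator identity for the subordinator integral, but this is standard given that the integrand is deterministic and bounded and $Z$ is integrable, so I anticipate no real obstacle beyond the bookkeeping that collects the three contributions into the claimed form.
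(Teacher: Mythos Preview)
Your proposal is correct and follows essentially the same route as the paper. The only cosmetic difference is the order of operations: the paper first applies Fubini to rewrite $\int_0^T\int_0^t e^{-\lambda(t-s)}\,dZ_s\,dt$ as $\lambda^{-1}\int_0^T(1-e^{-\lambda(T-s)})\,dZ_s$ and \emph{then} takes expectation, whereas you take $E^{\mathbb{Q}}[\sigma_t^2]$ first via the compensator identity and integrate in $t$ afterwards; the two computations are interchangeable and yield the same three contributions.
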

\begin{proof}
From \eqref{sigmat}, we can obtain
\begin{equation}
\label{se44}
\sigma_R^2=\frac{1}{T} \left(\lambda^{-1}(1-e^{-\lambda T})(\sigma_0^2-e^{X_0}) +\lambda^{-1}\int_0^T\left(1-e^{-\lambda(T-s)}\right)dZ_{\lambda s} \right)+ \frac{1}{T}\int_0^T e^{X_t}\,dt.
\end{equation}
Consequently,
\begin{align*}
& E^{\mathbb{Q}}(\sigma_R^2)= \frac{1}{T} \left(\lambda^{-1}(1-e^{-\lambda T})(\sigma_0^2-e^{X_0})+\lambda^{-1}\kappa_1  \int_0^T\left(1-e^{-\lambda(T-s)}\right)\, ds\right) + \frac{1}{T}\int_0^T E^{\mathbb{Q}} (e^{X_t})\,dt \\
&=\frac{1}{\lambda T}\left((1-e^{-\lambda T})(\sigma_0^2-e^{X_0})+ \left(T-\lambda^{-1}(1-e^{-\lambda T})\right)\kappa_1\right)+  \frac{1}{T}\int_{0}^{T}e^{e^{-\alpha t}(X_0-m)+m+\frac{1}{2}v(t)}\,dt,
\end{align*}
where  $\kappa_1$ is the first cumulant (i.e., the expected value) of $Z_1$, and $v(t)$ is given by \eqref{vt} (in Theorem \ref{t1}). 
\end{proof}
The following result that follows immediately provides an arbitrage-free variance swap pricing formula for the \emph{fractional BN-S stochastic volatility model}.
\begin{corollary}
The arbitrage free price of the variance swap for the  the \emph{rough BN-S stochastic volatility model} is given by
\begin{equation}
\label{var}
\text{P}_{\text{Var}}=\frac{e^{-rT}}{T}\int_{0}^{T}e^{e^{-\alpha t}(X_0-m)+m+\frac{1}{2}v(t)}\,dt + \mu_1(T),
\end{equation}
where $v(t)$ is given by \eqref{vt} and $\mu_1(T)=e^{-rT}\left(\frac{v_1(T)}{T}-\text{K}_{\text{Var}}\right)$, with $v_1(T)$ given by \eqref{v1}. 
\end{corollary}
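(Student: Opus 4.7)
The proof follows almost immediately from Theorem~\ref{t2}, so the plan is essentially to substitute and rearrange. First I would recall from Section~\ref{sec23} that the arbitrage-free price of a variance swap with respect to the risk-neutral measure $\mathbb{Q}$ is
\[
P_{\text{Var}} = E^{\mathbb{Q}}\!\left[e^{-rT}(\sigma_R^2-K_{\text{Var}})\right] = e^{-rT}\!\left(E^{\mathbb{Q}}[\sigma_R^2] - K_{\text{Var}}\right),
\]
using the linearity of expectation and the fact that $r$, $T$, and $K_{\text{Var}}$ are deterministic constants.

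Next I would invoke Theorem~\ref{t2}, which gives the closed-form expression
\[
E^{\mathbb{Q}}(\sigma_R^2) = \frac{v_1(T)}{T} + \frac{1}{T}\int_{0}^{T}e^{e^{-\alpha t}(X_0-m)+m+\frac{1}{2}v(t)}\,dt,
\]
with $v(t)$ given by \eqref{vt} and $v_1(T)$ given by \eqref{v1}. Substituting this into the preceding display and distributing the factor $e^{-rT}/T$ across the integral yields
\[
P_{\text{Var}} = \frac{e^{-rT}}{T}\int_{0}^{T}e^{e^{-\alpha t}(X_0-m)+m+\frac{1}{2}v(t)}\,dt + e^{-rT}\!\left(\frac{v_1(T)}{T} - K_{\text{Var}}\right).
\]
Recognizing the second term as $\mu_1(T)$ per the corollary's definition completes the proof.

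Since all the substantive work was done in establishing Theorem~\ref{t1} (which gave the expectation of $e^{X_t}$ via the Gaussian structure of the fractional OU process) and Theorem~\ref{t2} (which combined the subordinator contribution with the fractional part through the representation \eqref{sigmat}), there is no real obstacle here; the corollary is a packaging step that isolates the delivery-price-dependent term $\mu_1(T)$ from the part that depends only on the fractional OU dynamics. The only thing worth flagging is that one must take $K_{\text{Var}}$, $r$, and $\sigma_0^2$ to be $\mathcal{F}_0$-measurable (which is standard), so that pulling $e^{-rT}$ and $K_{\text{Var}}$ outside of $E^{\mathbb{Q}}$ is justified.
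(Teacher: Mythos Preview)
Your proposal is correct and matches the paper's own treatment: the corollary is stated as following immediately from Theorem~\ref{t2}, and your argument---substituting $E^{\mathbb{Q}}[\sigma_R^2]$ from \eqref{n2} into $P_{\text{Var}}=e^{-rT}(E^{\mathbb{Q}}[\sigma_R^2]-K_{\text{Var}})$ and regrouping---is exactly that immediate substitution.
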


\subsection{Pricing volatility swaps}
\label{sec32}

For simplicity, in this section we choose $\sigma_0^2=e^{X_0}$ for the \emph{fractional BN-S stochastic volatility model}. The intuition on this assumption is that we do not start the process shortly after a jump has already occurred. For a better understanding of this, consider what it means for $X_0\neq m$. This means that the fractional OU process has started away from its mean, and thus the term $e^{-\alpha t}(X_0-m)$ is non-zero. So we picture the expected process exponentially drifting back to its mean at rate $\alpha$ in this case. For the fractional BN-S model, looking at \eqref{se502}, we see that the drift term causes $\sigma_t^2$ to exponentially drift to $e^{X_t}$ at rate $\lambda$. What causes $\sigma_t^2$ to be different than $e^{X_t}$? This is exactly the jump term $dZ_t$. Hence we can view the appearance of $e^{-\lambda t}(\sigma_0^2-e^{X_0})$ in \eqref{sigmat} as the result of $\sigma_0^2$ being different than $e^{X_0}$, or that a jump as already occurred. So the assumption $\sigma_0^2=e^{X_0}$ is not restrictive, it is just the assumption that the initial condition starting away from the asymptotic mean $e^{m}$ is the result of a continuous drift of $e^{X_t}$ rather than a jump in $\sigma_t^2$. Thus for \eqref{sigmat}, we assume that $m_t=0$. Consequently,
$\sigma_t^2=  e^{X_t} + \int_0^t e^{-\lambda (t-s)}\,dZ_s$.

At this point, we introduce a couple of notations. For a stochastic process $A_t$, $0 \leq t \leq T$, we denote the characteristic function and the cumulant generating function of by $\Phi_{A_t}(\theta)= E(\exp(i\theta A_t))$ and $\kappa_{A_t}(\theta)= \log E(\exp(\theta A_t))$, respectively. The relation between the characteristic function and the cumulant generating function of $A_t$  is given by (see \cite{ISGSH}) $\Phi_{A_t}(\theta)= \exp[\kappa_{A_t}(i\theta)]$.
The moments of $A_t$ can be obtained from $\Phi_{A_t}(\theta)$ by $E^{\mathbb{Q}}(A_t^k)= (-i)^k \frac{d^k \Phi_{A_t}(\theta)}{d \theta^k} \Big|_{\theta=0}$, $k=1,2,\dots$.
We summarize the following lemma from \cite{ISGSH}.  
\begin{lemma}
\label{At}
Suppose that $A_t= \alpha+ \int_0^{\lambda t} (1-e^{-s})\, dZ_s$, where $\alpha \in \mathbb{R}$ is a constant, and $0 \leq t \leq T$. Then 
\begin{equation}
\label{phiAt}
\Phi_{A_t}(\theta)= \exp\left( i\theta \alpha+\int_0^{\lambda t} \kappa(i \theta (1-e^{-s}))\, ds\right),
\end{equation}
where $\kappa(\cdot)$ is the cumulant generating function for $Z_1$. The moments of $A_t$ are given by $E^{\mathbb{Q}}(A_t^k)= (-i)^k g_k(0)$, $k= 1,2,\dots$, where $g_1(\theta) = i \left(\alpha+\int_0^{\lambda t}   (1-e^{-s}) \kappa' (i \theta  (1-e^{-s}))\,ds\right)$, and $g_{k+1}(\theta)= g_1(\theta) g_k(\theta)+ g_{k}'(\theta)$, $k=1,2, \dots$.
\end{lemma}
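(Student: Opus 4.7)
The key identity underlying both claims is the Laplace/characteristic functional for a Lévy process against a deterministic integrand: if $Z$ has cumulant generating function $\kappa$, then for a reasonable deterministic function $f$ on $[0,\lambda t]$,
\[
E^{\mathbb{Q}}\!\left[\exp\!\left(\int_0^{\lambda t} f(s)\,dZ_s\right)\right]=\exp\!\left(\int_0^{\lambda t}\kappa(f(s))\,ds\right).
\]
First I would establish this when $f$ is a step function $f=\sum_{i}c_i \mathbf{1}_{[s_{i-1},s_i)}$. Then the stochastic integral is a telescoping sum of increments $\sum_i c_i(Z_{s_i}-Z_{s_{i-1}})$, and by independence and stationarity of the Lévy increments together with the definition of $\kappa$, the expectation factors as $\prod_i \exp((s_i-s_{i-1})\kappa(c_i))$, which is the desired Riemann sum. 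A standard approximation argument (the integrand $s\mapsto i\theta(1-e^{-s})$ is continuous and bounded on $[0,\lambda t]$, and $\kappa$ is continuous on the imaginary axis) upgrades this to arbitrary continuous $f$ taking imaginary values, which is exactly the situation we need.

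Applying this identity with $f(s)=i\theta(1-e^{-s})$, pulling the deterministic factor $e^{i\theta\alpha}$ out, gives
\[
\Phi_{A_t}(\theta)=E^{\mathbb{Q}}\!\left[e^{i\theta\alpha}\exp\!\left(\int_0^{\lambda t}i\theta(1-e^{-s})\,dZ_s\right)\right]=\exp\!\left(i\theta\alpha+\int_0^{\lambda t}\kappa(i\theta(1-e^{-s}))\,ds\right),
\]
which is \eqref{phiAt}.

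For the moment recursion, I would write $\Phi_{A_t}(\theta)=e^{h(\theta)}$ with $h(\theta)=i\theta\alpha+\int_0^{\lambda t}\kappa(i\theta(1-e^{-s}))\,ds$. Differentiating under the integral (justified because $(1-e^{-s})$ is bounded on the compact interval and $\kappa$ is smooth in a neighborhood of the imaginary axis under standard integrability assumptions on $Z_1$) yields
\[
h'(\theta)=i\alpha+i\int_0^{\lambda t}(1-e^{-s})\,\kappa'(i\theta(1-e^{-s}))\,ds=g_1(\theta),
\]
so that $\Phi_{A_t}'(\theta)=g_1(\theta)\,\Phi_{A_t}(\theta)$. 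I would then prove by induction that $\Phi_{A_t}^{(k)}(\theta)=g_k(\theta)\,\Phi_{A_t}(\theta)$ for all $k\geq 1$: assuming the identity at step $k$, the product rule gives
\[
\Phi_{A_t}^{(k+1)}(\theta)=g_k'(\theta)\Phi_{A_t}(\theta)+g_k(\theta)\Phi_{A_t}'(\theta)=\bigl(g_k'(\theta)+g_1(\theta)g_k(\theta)\bigr)\Phi_{A_t}(\theta),
\]
which forces the recursion $g_{k+1}=g_1 g_k+g_k'$. Evaluating at $\theta=0$, where $\Phi_{A_t}(0)=1$, and combining with $E^{\mathbb{Q}}(A_t^k)=(-i)^k\Phi_{A_t}^{(k)}(0)$ yields $E^{\mathbb{Q}}(A_t^k)=(-i)^k g_k(0)$.

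The main obstacle is the first step: rigorously passing from the step-function Lévy functional identity to the continuous integrand $i\theta(1-e^{-s})$, and justifying interchange of differentiation and integration for all orders $k$. Both issues are handled by dominated convergence using the compactness of $[0,\lambda t]$ and the smoothness of $\kappa$ on a strip containing the imaginary axis (implicitly assumed by the appearance of all cumulants $\kappa_j$ in the subsequent analysis); the rest of the argument is algebraic and proceeds by the induction above.
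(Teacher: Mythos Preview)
Your proof is correct. Note, however, that the paper itself does not give a proof of this lemma: it is quoted from \cite{ISGSH} (``We summarize the following lemma from \cite{ISGSH}''), so there is no in-paper argument to compare against. Your approach---establishing the exponential formula $E^{\mathbb{Q}}[\exp(\int_0^{\lambda t} f(s)\,dZ_s)]=\exp(\int_0^{\lambda t}\kappa(f(s))\,ds)$ via step-function approximation and independence/stationarity of L\'evy increments, then deriving the moment recursion by inductively differentiating $\Phi_{A_t}=e^{h}$---is precisely the standard route and is how the result is proved in the cited reference.
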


We observe that $\kappa^{(k)}(0)= \kappa_k$, for $k=1,2,\dots$. Consequently, by using Lemma \ref{At}, it is possible to compute any moment for the process $A_t= \alpha+ \int_0^{\lambda t} (1-e^{-s})\, dZ_s$, in terms of cumulants of $Z$. For example, $E^{\mathbb{Q}}(A_t)= -i g_1(0)= \alpha+ \int_0^{\lambda t} \kappa_1(1-e^{-s})\,ds= \alpha+\kappa_1(\lambda t -1 + e^{-\lambda t})$. Similarly,
\begin{align*}
E^{\mathbb{Q}}(A_t^2) = (-i)^2 g_2(0) & = -(g_1(0)^2+ g_1'(0)) \\
&= \left(\alpha+\kappa_1(\lambda t -1 + e^{-\lambda t})\right)^2+ \kappa_2\left(2e^{-\lambda t}-\frac{3}{2}-\frac{1}{2}e^{-2\lambda t}+\lambda t\right).
\end{align*} 
Next, we observe that
\begin{align}
\label{eins}
    E^{\mathbb{Q}}\left[\left(\frac{1}{T}\int_{0}^{T}e^{X_t}dt\right)^n\right]&=E^{\mathbb{Q}}\left[\frac{1}{T}\int_{0}^{T}e^{X_{t_1}}dt_1\cdots\frac{1}{T}\int_{0}^{T}e^{X_{t_n}}dt_n\right] \nonumber \\
    &=E^{\mathbb{Q}}\left[\frac{1}{T^n}\int_{0}^{T}\cdots\int_{0}^{T}e^{X_{t_1}+\cdots+X_{t_n}}dt_1\cdots dt_n\right] \nonumber \\
    &=\frac{1}{T^n}\int_{[0,T]^n}E^{\mathbb{Q}}\left[e^{X_{t_1}+\cdots+X_{t_n}}\right]dt_1\cdots dt_n.
\end{align}
This again allows us to use the structure of multivariate Gaussian's to compute desired quantities related to continuous Gaussian processes. Since the sum of the $X_{t_i}$ is Gaussian and we know its mean and covariance functions, we have
\begin{equation*}
E^{\mathbb{Q}}\left[e^{\sum_{i=1}^{n}X_{t_i}}\right]=e^{\sum_{i=1}^{n}\mu(t_i)+\frac{1}{2}\sum_{i,j=1}^{n}\rho(t_i,t_j)}.
\end{equation*}
So using this in \eqref{eins} we obtain:
\begin{equation}
\label{goest}
E^{\mathbb{Q}}\left[\left(\frac{1}{T}\int_{0}^{T}e^{X_t}dt\right)^n\right]=\frac{1}{T^n}\int_{[0,T]^n}e^{\sum_{i=1}^{n}\mu(t_i)+\frac{1}{2}\sum_{i,j=1}^{n}\rho(t_i,t_j)}dt_1\cdots dt_n.
\end{equation}

We introduce $\beta>0$ such that $\sigma_R^2<2\beta^2$. This is possible for any set of empirical financial data. 
\begin{theorem}
\label{sigmaR4}
Assume that $\sigma_R^2<2\beta^2$, for some $\beta>0$. Also, assume that $\sigma_0^2= e^{X_0}$. Then the arbitrage free value of the volatility swap is given by
\begin{equation}
\label{newww}
\text{P}_{\text{Vol}}  = e^{-rT}\left(\beta\sum_{n=0}^{\infty}\frac{(-1)^{n+1}(2n)!}{4^n(n!)^2(2n-1)}\frac{1}{\beta^{2n}}\sum_{k=0}^{n}A_{n-k}B_k -\text{K}_{\text{Vol}}\right),
\end{equation}
where
\begin{equation}
\label{AAn}
    A_k = E^{\mathbb{Q}}\left[\left(-\beta^2 + \frac{1}{\lambda T} \int_0^{\lambda T}\left(1-e^{-s}\right)dZ_{s}\right)^k\right],
\end{equation}
and 
\begin{equation}
\label{BBn}
    B_k=E^{\mathbb{Q}}\left[\left(\frac{1}{T}\int_{0}^{T}e^{X_t}dt\right)^k \right]=\frac{1}{T^n}\int_{[0,T]^k}e^{\sum_{i=1}^{k}\mu(t_i)+\frac{1}{2}\sum_{i,j=1}^{k}\rho(t_i,t_j)}dt_1\cdots dt_k,
\end{equation}
for $k=1,2,\dots$. The expression \eqref{AAn} for $A_k$, $k=1,2,\dots$, can be computed by using Lemma \ref{At}.
\end{theorem}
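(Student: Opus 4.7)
The plan is to compute $E^{\mathbb{Q}}[\sigma_R]$ via a convergent binomial series expansion of $\sqrt{\,\cdot\,}$, since $P_{\text{Vol}} = e^{-rT}(E^{\mathbb{Q}}[\sigma_R] - K_{\text{Vol}})$ and this is the only nontrivial ingredient. The hypothesis $\sigma_R^2 < 2\beta^2$ is tailored exactly so that writing $\sigma_R = \beta\sqrt{1 + (\sigma_R^2-\beta^2)/\beta^2}$ and applying the Newton series
\begin{equation*}
\sqrt{1+x} = \sum_{n=0}^{\infty}\frac{(-1)^{n+1}(2n)!}{4^n(n!)^2(2n-1)}\,x^n, \qquad |x|<1,
\end{equation*}
at $x = (\sigma_R^2-\beta^2)/\beta^2 \in (-1,1)$ yields a pathwise convergent expansion. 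Taking $E^{\mathbb{Q}}$ and interchanging it with the sum reduces everything to computing the moments $E^{\mathbb{Q}}[(\sigma_R^2-\beta^2)^n]$ for each $n\geq 0$.

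Second, I would exploit $\sigma_0^2 = e^{X_0}$ (which kills $m_t$ in \eqref{sigmat}) and combine it with \eqref{se44} to decompose
\begin{equation*}
\sigma_R^2 \stackrel{d}{=} U + V, \qquad U := \frac{1}{\lambda T}\int_{0}^{\lambda T}(1-e^{-s})\,dZ_s, \qquad V := \frac{1}{T}\int_{0}^{T} e^{X_t}\,dt.
\end{equation*}
The distributional identity follows from matching the cumulant transforms of $\int_0^T(1-e^{-\lambda(T-s)})\,dZ_{\lambda s}$ and $\int_0^{\lambda T}(1-e^{-s})\,dZ_s$ via the change of variable $u=\lambda(T-s)$, and $U$ is independent of $V$ because $Z$ is independent of $W^H$. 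Expanding $(\sigma_R^2-\beta^2)^n = ((U-\beta^2)+V)^n$ binomially and invoking independence gives
\begin{equation*}
E^{\mathbb{Q}}[(\sigma_R^2-\beta^2)^n] = \sum_{k=0}^{n}\binom{n}{k}E^{\mathbb{Q}}[(U-\beta^2)^{n-k}]\,E^{\mathbb{Q}}[V^k] = \sum_{k=0}^{n}\binom{n}{k}A_{n-k}B_k,
\end{equation*}
where $A_k$ is precisely the form produced by Lemma \ref{At} with $\alpha=-\beta^2$ and $t=T$, and $B_k$ is exactly \eqref{goest}. Substituting this back into the series expansion of $E^{\mathbb{Q}}[\sigma_R]$, discounting by $e^{-rT}$, and subtracting $K_{\text{Vol}}$ produces the displayed formula.

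The main obstacle will be rigorously justifying the interchange of $E^{\mathbb{Q}}$ with the infinite sum. For this I would appeal to dominated convergence: the pathwise bound $|(\sigma_R^2-\beta^2)/\beta^2|<1$ (a consequence of $0<\sigma_R^2<2\beta^2$) yields the uniform termwise estimate $|c_n\beta^{-2n}(\sigma_R^2-\beta^2)^n|\leq |c_n|$ with $c_n$ the Newton coefficients, and Stirling applied to $(2n)!/(4^n(n!)^2)$ gives $|c_n|\sim C n^{-3/2}$, so $\sum_n|c_n|<\infty$ furnishes a constant (hence integrable) dominating series. The secondary technicality, the distributional equivalence of the two subordinator integrals, is handled by the cumulant-matching argument described above, so once both hurdles are cleared the remaining algebra is routine.
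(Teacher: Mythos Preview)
Your approach mirrors the paper's almost exactly: both write $\sigma_R=\beta\sqrt{1+(\sigma_R^2-\beta^2)/\beta^2}$, expand via the Newton binomial series, decompose $\sigma_R^2-\beta^2=(U-\beta^2)+V$ with $U$ the subordinator integral and $V=\frac{1}{T}\int_0^T e^{X_t}\,dt$, invoke independence of $Z$ and $W^H$, and identify the moments $A_k$ via Lemma~\ref{At} and $B_k$ via \eqref{goest}. The paper's proof does not justify the interchange of $E^{\mathbb{Q}}$ with the infinite sum; your dominated-convergence argument (the Newton coefficients satisfy $|c_n|\sim Cn^{-3/2}$, hence are summable, and $|(\sigma_R^2-\beta^2)/\beta^2|<1$ pathwise) is a genuine gain in rigor over the paper.

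There is, however, a discrepancy worth flagging. Your binomial expansion of $((U-\beta^2)+V)^n$ correctly produces
\[
E^{\mathbb{Q}}\bigl[(\sigma_R^2-\beta^2)^n\bigr]=\sum_{k=0}^{n}\binom{n}{k}A_{n-k}B_k,
\]
whereas both the theorem statement \eqref{newww} and the paper's proof record $\sum_{k=0}^{n}A_{n-k}B_k$ \emph{without} the binomial coefficient. Your version is the correct one; the paper has evidently dropped the $\binom{n}{k}$. So while your route is exactly the intended one, it does not establish the identity \eqref{newww} as literally written---it establishes the corrected formula with $\binom{n}{k}$ inserted. You should note this explicitly rather than silently reproduce the paper's omission.
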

\begin{proof}
With $m_t=0$, we have the following:
\begin{equation*}
\sigma_R^2=\frac{1}{\lambda T}\int_{0}^{T} \left(1-e^{-\lambda(T-s)}\right)dZ_s+\frac{1}{T}\int_{0}^{T}e^{X_t}dt.
\end{equation*}
If $\sigma_R^2<2\beta^2$, then we notice that $\sigma_R^2=\beta^2\left(1+\frac{\sigma_R^2-\beta^2}{\beta^2}\right)$, with $|\frac{\sigma_R^2-\beta^2}{\beta^2}|<1$.
We note that for $|x|<1$, 
\[\sqrt{x+1}=\sum_{n=0}^{\infty}\frac{(-1)^{n+1}(2n)!}{4^n(n!)^2(2n-1)}x^n.\]
Consequently, $E^{\mathbb{Q}}\left[\sqrt{\sigma_R^2}\right]=\beta E^{\mathbb{Q}}\left[\sqrt{1+\frac{\sigma_R^2-\beta^2}{\beta^2}}\right]$. 
Since we are assuming $\left|\frac{\sigma_R^2-\beta^2}{\beta^2}\right|<1$, we can use our ability to compute $E^{\mathbb{Q}}\left[\left(\frac{\sigma_R^2-\beta^2}{\beta^2}\right)^n\right]$ in order to evaluate the expectation of the realized volatility. We find
\[\sigma_R^2-\beta^2=-\beta^2+\frac{1}{\lambda T}\int_{0}^{T}\left(1-e^{-\lambda(T-s)}\right)dZ_s+\frac{1}{T}\int_{0}^{T}e^{X_t}dt.\]
We note that 
\begin{align}
\label{distt}
\frac{1}{\lambda T} \int_0^T\left(1-e^{-\lambda(T-s)}\right)dZ_{\lambda s}  \,{\buildrel d \over =}\, \frac{1}{\lambda T} \int_0^{\lambda T}\left(1-e^{-s}\right)dZ_{s},
\end{align}
where $\,{\buildrel d \over =}\,$ denotes the equality in distribution. 
We denote for $n=1,2,\dots$,
\begin{align*}
    A_n  =E^{\mathbb{Q}}\left[\left(-\beta^2+\frac{1}{\lambda T}\int_{0}^{T}\left(1-e^{-\lambda(T-s)}\right)dZ_s\right)^n\right] = E^{\mathbb{Q}}\left[\left(-\beta^2 + \frac{1}{\lambda T} \int_0^{\lambda T}\left(1-e^{-s}\right)dZ_{s}\right)^n\right],
\end{align*}
where the last equality follows from \eqref{distt}. 
We also denote $B_n=E^{\mathbb{Q}}\left[\left(\frac{1}{T}\int_{0}^{T}e^{X_t}dt\right)^n \right]$, $n=1,2,\dots$. Thus, by using \eqref{goest} we obtain \eqref{BBn}. Consequently, $E^{\mathbb{Q}}\left[\left(\frac{\sigma_R^2-\beta^2}{\beta^2}\right)^n\right]=\frac{1}{\beta^{2n}}\sum_{k=0}^{n}A_{n-k}B_{k}$. Thus we obtain:
\begin{align*}
    E^{\mathbb{Q}}\left[\sqrt{\sigma_R^2}\right]=\beta E^{\mathbb{Q}}\left[\sqrt{1+\frac{\sigma_R^2-\beta^2}{\beta^2}}\right]=\beta\sum_{n=0}^{\infty}\frac{(-1)^{n+1}(2n)!}{4^n(n!)^2(2n-1)}\frac{1}{\beta^{2n}}\sum_{k=0}^{n}A_{n-k}B_k.
\end{align*}
\end{proof}

Next we consider the infinite series $\beta\sum_{n=0}^{\infty}\frac{(-1)^{n+1}(2n)!}{4^n(n!)^2(2n-1)}\frac{1}{\beta^{2n}}\sum_{k=0}^{n}A_{n-k}B_k$ of Theorem \ref{sigmaR4}. We find a rate of convergence of this series for a special annular region. For this, we assume that $0<\beta^2<\sigma_R^2<2\beta^2$. 
\begin{theorem}
For the annular region $0<\beta^2<\sigma_R^2<2\beta^2$, the $(N-1)$-th partial sum of $$\beta\sum_{n=0}^{\infty}\frac{(-1)^{n+1}(2n)!}{4^n(n!)^2(2n-1)}\frac{1}{\beta^{2n}}\sum_{k=0}^{n}A_{n-k}B_k,$$  has absolute error less than $\beta \frac{1}{(2N-1)\sqrt{3N+1}}$, in approximating $E^{\mathbb{Q}}\left[\sqrt{\sigma_R^2}\right]$.
\end{theorem}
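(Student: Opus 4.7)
The plan is to identify the tail of the series as an alternating series and then apply the standard alternating-series remainder bound combined with a known estimate on the central binomial coefficient. Throughout, let $Y := (\sigma_R^2-\beta^2)/\beta^2$. On the annular region $0<\beta^2<\sigma_R^2<2\beta^2$ we have $Y\in(0,1)$ almost surely, so $E^{\mathbb{Q}}[Y^n]\in(0,1)$ and, since $Y<1$, the sequence $n\mapsto E^{\mathbb{Q}}[Y^n]$ is strictly decreasing. By the derivation in Theorem \ref{sigmaR4}, the quantity we are approximating can be written as $E^{\mathbb{Q}}[\sqrt{\sigma_R^2}]=\beta\sum_{n=0}^{\infty}c_n\,E^{\mathbb{Q}}[Y^n]$, where $c_n:=\dfrac{(-1)^{n+1}(2n)!}{4^n(n!)^2(2n-1)}$ and the $n$-th term of the series in the theorem equals $\beta c_n E^{\mathbb{Q}}[Y^n]$.

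Next I would study the scalar coefficients $c_n$. For $n\ge 1$ the signs $(-1)^{n+1}$ alternate, so the tail $\sum_{n\ge N}c_n E^{\mathbb{Q}}[Y^n]$ is an alternating series (taking $N\ge 1$). A short computation of the ratio
\[
\frac{|c_{n+1}|}{|c_n|}=\frac{(2n+1)(2n+2)}{4(n+1)^2}\cdot\frac{2n-1}{2n+1}=\frac{2n-1}{2(n+1)}<1,\qquad n\ge 1,
\]
shows that $|c_n|$ is strictly decreasing. Combining this with the monotonicity of $E^{\mathbb{Q}}[Y^n]$, the magnitudes $|c_n|\,E^{\mathbb{Q}}[Y^n]$ are strictly decreasing as well. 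The alternating-series error bound then gives
\[
\left|\beta\sum_{n=N}^{\infty}c_n E^{\mathbb{Q}}[Y^n]\right|\le \beta\,|c_N|\,E^{\mathbb{Q}}[Y^N]\le \beta\,|c_N|,
\]
where the final inequality uses $E^{\mathbb{Q}}[Y^N]<1$.

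To finish, I would apply the sharp central-binomial estimate $\binom{2N}{N}\le 4^N/\sqrt{3N+1}$ (valid for all $N\ge 0$, with equality at $N=0,1$). Since $|c_N|=\binom{2N}{N}/[4^N(2N-1)]$, this immediately yields $|c_N|\le 1/[(2N-1)\sqrt{3N+1}]$, so the absolute error of the $(N-1)$-th partial sum is at most $\beta/[(2N-1)\sqrt{3N+1}]$, as claimed.

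The main obstacle is bundling together the two monotonicity facts (of $|c_n|$ and of $E^{\mathbb{Q}}[Y^n]$) so that the alternating-series bound can be applied to the randomized series $\sum c_n E^{\mathbb{Q}}[Y^n]$ rather than to a pointwise series in $Y$; the positivity and boundedness of $Y$ on the annular region are essential here, and without the lower bound $\beta^2<\sigma_R^2$ one would lose the sign structure and hence the sharper bound. The central-binomial estimate is standard but merits a citation or a brief induction.
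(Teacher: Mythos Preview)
Your proof is correct and follows essentially the same route as the paper: identify the tail as an alternating series, bound the remainder by the first omitted term $\beta|c_N|E^{\mathbb{Q}}[Y^N]\le\beta|c_N|$, and then invoke the central-binomial estimate $\binom{2N}{N}\le 4^N/\sqrt{3N+1}$. The only difference is that you explicitly verify the monotonicity of $|c_n|$ (via the ratio computation) and of $E^{\mathbb{Q}}[Y^n]$ before applying the alternating-series bound, whereas the paper simply asserts ``properties of an alternating series'' together with $E^{\mathbb{Q}}[Y^n]<1$; your version is therefore slightly more complete, but the argument is the same.
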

\begin{proof}
With the condition, $\beta^2<\sigma_R^2<2\beta^2$, the infinite series representation of $E(\sigma_R)$ is an alternating series. In addition, we observe that,
\begin{equation}
\label{buf}
E^{\mathbb{Q}}\left[\left(\frac{\sigma_R^2-\beta^2}{\beta^2}\right)^n\right]=\frac{1}{\beta^{2n}}\sum_{k=0}^{n}A_{n-k}B_{k} <1,  \quad \text{ for each} \quad n.
\end{equation}
 Therefore, 
\begin{align*}
& \left|E^{\mathbb{Q}}(\sigma_R)- \beta\sum_{n=0}^{N-1}\frac{(-1)^{n+1}(2n)!}{4^n(n!)^2(2n-1)}\frac{1}{\beta^{2n}}\sum_{k=0}^{n}A_{n-k}B_k, \right| \\
& \leq \frac{\beta}{4^N (2N-1)}{2N \choose N},
\end{align*}
where we use properties of an alternating series and \eqref{buf}. It is well known that  ${2n \choose n} \leq \frac{4^n}{\sqrt{3n+1}}$, for all $n \geq 1$.
Therefore we obtain
\begin{align*}
\left|E^{\mathbb{Q}}(\sigma_R)- \beta\sum_{n=0}^{N-1}\frac{(-1)^{n+1}(2n)!}{4^n(n!)^2(2n-1)}\frac{1}{\beta^{2n}}\sum_{k=0}^{n}A_{n-k}B_k, \right|< \beta \frac{1}{(2N-1)\sqrt{3N+1}}.
\end{align*}
\end{proof}

The quantity $\beta$ can be used as a ``control parameter" that improves the rate of convergence of the infinite series $\beta\sum_{n=0}^{\infty}\frac{(-1)^{n+1}(2n)!}{4^n(n!)^2(2n-1)}\frac{1}{\beta^{2n}}\sum_{k=0}^{n}A_{n-k}B_k$, that appears on the right hand side of \eqref{newww}. 

\section{Quadratic hedging with fractional BN-S stochastic volatility model}
\label{sec4}

A perfect hedge is a position undertaken by an investor that would eliminate the risk of an existing position, or a position that eliminates all market risk from a portfolio. However, in real markets perfect hedges do not exist and options are not redundant. Thus we are forced to reconsider hedging in the more realistic sense of approximating a target payoff with a trading strategy. 
There are various approaches (see \cite{cont}) of hedging in an incomplete market such as Merton's approach, superhedging, utility maximization, quadratic hedging, etc. In this section, we are implementing techniques to minimize quadratic hedging errors when the market dynamics are modeled by the \emph{fractional BN-S stochastic volatility model}.

\subsection{Conditional distribution of $X_t$}

In this section we develop some results related to the Conditional distribution of $X_t$. The results will be implemented in the following sections. In \cite{conditional}, the following notations are introduced. Using the same $\alpha$ that is used in the  \emph{fractional BN-S stochastic volatility model} (in \eqref{se503}), we define:
\begin{equation*}
\left(I_{T-}^{H-1/2}f\right)(\tau) =\frac{1}{\Gamma(H-1/2)}\int_{\tau}^{T}f(s)(s-\tau)^{H-3/2}ds,
\end{equation*}
\begin{equation*}
\langle f,g\rangle_{H-1/2,T} =C(H)\int_{0}^{T}s^{-(2H-1)}\left(I_{T-}^{H-1/2}(\cdot)^{H-1/2}f(\cdot)\right)(s)\left(I_{T-}^{H-1/2}(\cdot)^{H-1/2}g(\cdot)\right)(s)ds,
\end{equation*}
where $C(H)= \frac{\pi(H-1/2)2H}{\Gamma(2-H)\sin(\pi(H-1/2))}$,
\begin{equation*}
||f||_{H-1/2,T}^2=\langle f,f\rangle_{H-1/2,T},
\end{equation*}
\begin{equation*}
\Phi_c(s,t,v)=\frac{\sin{\pi(H-1/2)}}{\pi}v^{-(H-1/2)}(s-v)^{-(H-1/2)}\int_{s}^{t}\frac{z^{H-1/2}(z-s)^{H-1/2}}{z-v}c(z)dz,
\end{equation*}
and
\begin{equation*}
 D(s,t)=\frac{1}{\alpha}(1-e^{-\alpha(t-s)}).
\end{equation*}
The following theorem is proved in \cite{conditional}.
\begin{theorem}
    Consider the process $X_t$ given by $dX_t=\nu dW^H_t+\alpha(m-X_t)dt$. Let $c(\tau)=\nu D(t, \tau)$. Then $X_{\tau}$ conditional on $\mathcal{F}_{t}$ is normally distributed with mean
\begin{equation}
\label{Mttt}
M(\tau, t)=X_{t}e^{-\alpha(\tau-t)}+m(1-e^{-\alpha(\tau-t)})+\int_{0}^{t}\Phi_c(t,\tau,s)dW_s^H,
\end{equation}
    and variance
\begin{equation}
\label{Nttt}
 V(\tau, t)=||c(\cdot)1_{[t, \tau]}||^2_{H-1/2,T}-||\Phi_c^{H-1/2}(t, \tau,\cdot)1_{[0,t]}(\cdot)||^2_{H-1/2,T}.
\end{equation}
    Therefore, 
\begin{equation}
\label{nicksen}
 E^{\mathbb{Q}}\left[e^{X_{\tau}}|\mathcal{F}_t\right]=e^{M(\tau, t)+\frac{1}{2}V(\tau, t)}.
\end{equation}
\end{theorem}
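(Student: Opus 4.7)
The plan is to exploit three facts in sequence: first, that the Young SDE $dX_t = \nu\,dW_t^H + \alpha(m-X_t)\,dt$ can be solved explicitly, making $X_\tau$ a linear functional of the fractional Brownian path; second, that a linear functional of a Gaussian process is Gaussian, so the conditional law of $X_\tau$ given $\mathcal{F}_t$ must be Gaussian; and third, that for a Gaussian conditional law only the conditional mean and variance are needed, after which the Gaussian moment generating function gives the exponential identity \eqref{nicksen} for free.

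First, I would solve the fractional OU equation in closed form, in analogy with the variation-of-parameters formula derived earlier in Section \ref{sec22}. Starting from $X_t$ at time $t$ and applying the Young-sense integrating-factor trick gives
\[
X_\tau = e^{-\alpha(\tau-t)}X_t + m\bigl(1-e^{-\alpha(\tau-t)}\bigr) + \nu\int_t^\tau e^{-\alpha(\tau-s)}\,dW_s^H,
\]
where the last integral is a Young integral. A Young-sense integration by parts rewrites it as an integral of $W^H$ against a deterministic kernel whose antiderivative is essentially $c(\cdot)=\nu D(t,\cdot)$. Hence the whole conditional problem reduces to computing the law, given $\mathcal{F}_t$, of a linear functional of the form $\int_0^\tau c(s)\,dW_s^H$, up to an $\mathcal{F}_t$-measurable shift.

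The main obstacle is that $W^H$ is neither Markov nor a semimartingale, so I cannot simply split the integral into an $\mathcal{F}_t$-measurable part plus something independent of the past. The standard way around this, which underlies the results of \cite{conditional}, is to pass to the Molchan/Volterra representation $W_t^H = \int_0^t K_H(t,s)\,dW_s$ for a standard Brownian motion $W$ generating the same filtration; equivalently, to work inside the reproducing-kernel Hilbert space of $W^H$, whose inner product is precisely $\langle\cdot,\cdot\rangle_{H-1/2,T}$. Under this isometry, $\int_0^\tau c(s)\,dW_s^H$ becomes an It\^o integral in $W$, which can be orthogonally decomposed at time $t$: the projection of the integrand onto $L^2([0,t],\,dW)$, pushed back through the fractional operator $I_{T-}^{H-1/2}$, yields an $\mathcal{F}_t$-measurable integral against $W^H$ with kernel $\Phi_c(t,\tau,\cdot)$, and this is exactly the stochastic term in the claimed conditional mean $M(\tau,t)$. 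The orthogonal residual is independent of $\mathcal{F}_t$, centered Gaussian, and its $L^2$-norm is the difference of squared norms appearing in $V(\tau,t)$. Combining with the deterministic drift $e^{-\alpha(\tau-t)}X_t + m(1-e^{-\alpha(\tau-t)})$ produces the stated mean.

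Finally, having established that $X_\tau\mid\mathcal{F}_t \sim \mathcal{N}\bigl(M(\tau,t),V(\tau,t)\bigr)$, the identity $E^{\mathbb{Q}}\bigl[e^{X_\tau}\mid \mathcal{F}_t\bigr] = \exp\bigl(M(\tau,t)+\tfrac{1}{2}V(\tau,t)\bigr)$ is the moment generating function of a normal random variable evaluated at the argument $1$. The chief technical difficulty is executing the orthogonal projection in the third step cleanly, since the fractional integral operators $I_{T-}^{H-1/2}$ are singular and do not admit the pointwise manipulations available for classical Brownian motion; rather than reinvent this machinery, I would invoke the calculus developed in \cite{conditional} and verify that its hypotheses apply to the deterministic kernel $c(\tau)=\nu D(t,\tau)$ that arose in step one.
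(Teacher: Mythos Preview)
Your proposal is correct and in fact goes further than the paper itself: the paper does not prove this theorem at all but simply attributes it to \cite{conditional}. Your sketch---solving the fractional OU equation in closed form, passing to the Volterra/Molchan representation to effect an orthogonal decomposition at time $t$, identifying the projection kernel with $\Phi_c$, and reading off the conditional mean and variance---is precisely the machinery of \cite{conditional}, and your final plan to invoke that reference after checking that the kernel $c(\tau)=\nu D(t,\tau)$ satisfies its hypotheses is exactly what the paper does (minus the sketch). So there is no disagreement in approach; you have simply unpacked what the paper leaves as a black-box citation.
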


We denote 
\begin{equation}
\label{tildeW}
\tilde{W}_{t}(\tau)=\sqrt{2H}\int_{t}^{\tau}(\tau-s)^{H-1/2}dW_s, \quad \text{for} \quad \tau>t.
\end{equation}
We are interested in computing $E^{\mathbb{Q}}\left[\left.e^{\eta\tilde{W}_t(\tau)-\alpha\int_{t}^{\tau}X_sds}\right|\mathcal{F}_t\right]$. 

\begin{theorem}
For  $\tau>t$, we define $\Lambda(\tau, t)= E^{\mathbb{Q}}\left[ \exp\left( \eta \tilde{W}_t(\tau)   -\alpha\int_t^{\tau} X_s\,ds  \right) | \mathcal{F}_t\right]$.
Then, 
\begin{equation}
\label{Lammbda}
\Lambda(t, \tau)= \exp\left(M(\tau, t)+\frac{1}{2}V(\tau, t)- X_t -\alpha m (\tau-t)- \nu C_H N_t(\tau) \right),
\end{equation}
where $M(\tau,t)$ and $N(\tau,t)$ are given by \eqref{Mttt} and \eqref{Nttt} respectively, and 
\begin{equation}
\label{Nnnt}
N_t(\tau)= \int_{-\infty}^t \left(\frac{1}{(\tau-s)^{\gamma}}- \frac{1}{(t-s)^{\gamma}}\right)\, dW_s,
\end{equation}
\end{theorem}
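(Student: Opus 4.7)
The plan is to reduce $\Lambda(\tau,t)$ to an application of the previous theorem, which has already computed $E^{\mathbb Q}[e^{X_\tau}\mid \mathcal F_t]$ in closed form. The key first manipulation is to integrate the defining SDE $dX_s = \alpha(m-X_s)\,ds + \nu\,dW_s^H$ directly from $t$ to $\tau$, which produces
$$\alpha\int_t^\tau X_s\,ds \;=\; \alpha m(\tau-t) + \nu\bigl(W_\tau^H - W_t^H\bigr) - (X_\tau - X_t),$$
trading the time integral for the terminal value $X_\tau$ together with a single fractional Brownian increment.

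Next I would invoke the Mandelbrot--van Ness representation of $W^H$ in terms of the underlying two-sided standard Brownian motion $W$, splitting the increment into a ``new'' piece supported on $[t,\tau]$ and a ``past'' piece supported on $(-\infty,t]$:
$$W_\tau^H - W_t^H \;=\; C_H\int_t^\tau (\tau-s)^{H-1/2}\,dW_s \;+\; C_H\int_{-\infty}^t \Bigl[(\tau-s)^{H-1/2} - (t-s)^{H-1/2}\Bigr]\,dW_s.$$
By the defining formula \eqref{Nnnt}, the second summand equals $C_H\, N_t(\tau)$ and is $\mathcal F_t$-measurable; by the $\sqrt{2H}$ normalization in \eqref{tildeW}, the first summand equals $(C_H/\sqrt{2H})\,\tilde W_t(\tau)$.

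Substituting these identifications into the original exponent, the net coefficient of $\tilde W_t(\tau)$ becomes $\eta - \nu C_H/\sqrt{2H}$; the fact that the stated formula \eqref{Lammbda} carries no residual $\eta$-dependence forces the implicit calibration $\eta = \nu C_H/\sqrt{2H}$, so this coefficient vanishes. The remaining exponent reduces to $X_\tau - X_t - \alpha m(\tau-t) - \nu C_H N_t(\tau)$, whose only non-$\mathcal F_t$-measurable summand is $X_\tau$. Pulling the $\mathcal F_t$-measurable factors through the conditional expectation and applying \eqref{nicksen} to replace $E^{\mathbb Q}[e^{X_\tau}\mid \mathcal F_t]$ by $\exp\bigl(M(\tau,t) + \tfrac12 V(\tau,t)\bigr)$ delivers the announced identity.

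The main obstacle is bookkeeping around the Mandelbrot--van Ness constant: one has to match the $C_H$ that appears in the decomposition of $W_\tau^H - W_t^H$ with the $C_H$ multiplying $N_t(\tau)$ in \eqref{Nnnt}, verify the $\sqrt{2H}$ scaling that identifies the new-noise integral with $\tilde W_t(\tau)$, and confirm that the working filtration $\mathcal F_t$ is generated by the two-sided Brownian motion so that $N_t(\tau)$ is genuinely $\mathcal F_t$-measurable. Once these conventions are pinned down, no further stochastic-calculus input is needed beyond the previous conditional-distribution theorem.
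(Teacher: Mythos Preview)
Your proposal is correct and is essentially the same argument as the paper's proof: integrate the SDE for $X$ over $[t,\tau]$, split $W_\tau^H-W_t^H$ via the Mandelbrot--Van Ness representation into the $\mathcal F_t$-independent part $C_H M_t(\tau)=\tfrac{C_H}{\sqrt{2H}}\tilde W_t(\tau)$ and the $\mathcal F_t$-measurable part $C_H N_t(\tau)$, use the calibration $\eta=\nu C_H/\sqrt{2H}$ to cancel the $\tilde W_t(\tau)$ term, and then invoke \eqref{nicksen}. The only cosmetic difference is that the paper writes the identity as $E^{\mathbb Q}[e^{X_\tau}\mid\mathcal F_t]=e^{X_t+\alpha m(\tau-t)+\nu C_H N_t(\tau)}\Lambda(\tau,t)$ and then solves for $\Lambda$, whereas you compute $\Lambda$ directly; the content is identical.
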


\begin{proof}
The Mandelbrot-Van Ness representation of fractional Brownian motion $W^H$ is given by (see \cite{Bayer})
$$W_t^H= C_H \left( \int_{-\infty}^t \frac{dW_s}{(t-s)^{\gamma}} - \int_{-\infty}^0 \frac{dW_s}{(-s)^{\gamma}} \right),$$
where $\gamma= \frac{1}{2}-H$, and $C_H= \sqrt{\frac{2H \Gamma(\frac{3}{2}-H)}{\Gamma(H+ \frac{1}{2}) \Gamma(2-2H)}}$. 
We observe
\begin{align*}
\log e^{X_{\tau}}- \log e^{X_t} & = X_{\tau}-X_t \\
&= \nu(W_{\tau}^H- W_t^H)- \alpha \int_t^{\tau} (X_s-m)\,ds \\
& = \nu C_H \left( \int_{-\infty}^{\tau} \frac{dW_s}{(\tau-s)^{\gamma}} -\int_{-\infty}^t \frac{dW_s}{(t-s)^{\gamma}} \right)- \alpha \int_t^{\tau} (X_s-m)\,ds \\
&= \nu C_H (M_t(\tau) +N_t(\tau)) -\alpha\int_t^{\tau} X_s\,ds + \alpha m (\tau-t),
\end{align*}
where 
\begin{equation}
\label{Mmmt}
M_t(\tau)= \int_t^{\tau}\frac{dW_s}{(\tau-s)^{\gamma}},
\end{equation}
is independent of $\mathcal{F}_t$, and $N_t(\tau)$ given by \eqref{Nnnt} is $\mathcal{F}_t$-measurable. Consequently,
\begin{equation}
\label{psit}
e^{X_{\tau}}= e^{X_t+\alpha m (\tau-t)} \exp\left( \nu C_H (M_t(\tau) +N_t(\tau)) -\alpha\int_t^{\tau} X_s\,ds  \right).
\end{equation}
With $\tilde{W}_t(\tau)$ defined in\eqref{tildeW}, and $\eta= \frac{\nu C_H}{\sqrt{2H}}$, we obtain $\nu C_H M_t(\tau)= \eta \tilde{W}_t(\tau)$. Hence
\begin{align*}
E^{\mathbb{Q}}(e^{X_{\tau}}| \mathcal{F}_t) & = e^{X_t +\alpha m (\tau-t)+ \nu C_H N_t(\tau)  }E^{\mathbb{Q}}\left[ \exp\left( \eta \tilde{W}_t(\tau)   -\alpha\int_t^{\tau} X_s\,ds  \right) | \mathcal{F}_t\right] \\
& =  e^{X_t +\alpha m (\tau-t)+ \nu C_H N_t(\tau)  } \Lambda(\tau, t),
\end{align*}
where $\Lambda(\tau, t)= E^{\mathbb{Q}}\left[ \exp\left( \eta \tilde{W}_t(\tau)   -\alpha\int_t^{\tau} X_s\,ds  \right) | \mathcal{F}_t\right]$. Comparing the last expresssion with \eqref{nicksen}, we obtain \eqref{Lammbda}. 
\end{proof}

\subsection{Quadratic optimal hedging strategy}

In this subsection, we analyze an optimal hedging strategy in terms of quadratic hedging in connection to the \emph{fractional BN-S stochastic volatility model}. Quadratic hedging is a hedging strategy that minimizes the hedging error in the mean square sense (see \cite{sub1, hum, sub2}). For the simplicity of notation, we denote the arbitrage-free variance swap price ($\text{P}_{\text{Var}}(t, \sigma_t^2)$) by $P(t, \sigma_t^2)$. Before proving the main result related to quadratic hedging, we provide a representation of the variance swap price in terms of two functions- one is dependent on the continuous dynamics of the stock price, and the other is dependent on the significant fluctuations (``jumps") of the stock price.

\begin{theorem}
\label{thm1}
Consider the \emph{fractional BN-S stochastic volatility model} given by \eqref{se500}, \eqref{se501}, \eqref{se502}, and \eqref{se503}. Then, the arbitrage free value of $P(t, \sigma_t^2)$, with respect to the equivalent martingale measure $\mathbb{Q}$, is almost surely given by
\begin{align}
\label{mainn}
P(t, \sigma_t^2)= P_1(t)+ P_2(t,Z_t),
\end{align}
where 
\begin{equation}
\label{p1}
P_1(t) = \frac{e^{-r(T-t)}}{T}\left(\int_t^T \exp(M(s, t)+\frac{1}{2}V(s, t)) \, ds + \int_0^t \exp(e^{-\alpha s}(X_0-m)+m+\frac{1}{2}v(s))\, ds \right), 
\end{equation}
and $P_2$ satisfies the integro-differential equation
\begin{equation}
\label{p2eq}
-rP_2+ \frac{\partial P_2}{\partial t} + \int_0^{\infty} \left(P_2(t, Z_{t-}+y)- P_2(t, Z_{t-})- y \frac{\partial P_2}{\partial z} 1_{|y| \leq 1}\right) \nu(dy)=0.
\end{equation} 
with the final condition
\begin{equation}
\label{finalcon}
 P_2(T,  Z_T)= E^{\mathbb{Q}}(\sigma_R^2)- \text{K}_{\text{Var}}- K(T),
\end{equation}
where 
\begin{equation}
\label{kt}
K(T)= \frac{1}{T} \int_0^T \exp(e^{-\alpha s}(X_0-m)+m+\frac{1}{2}v(s)) \,ds.
\end{equation}
In the above expressions $M(s,t)$, $N(s,t)$, and $v(s)$ are given by \eqref{Mttt}, \eqref{Nttt}, and \eqref{vt}, respectively.
\end{theorem}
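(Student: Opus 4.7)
My plan is to start from the defining formula for the arbitrage-free price, $P(t,\sigma_t^2)=e^{-r(T-t)}\bigl(E^{\mathbb{Q}}[\sigma_R^2\mid\mathcal{F}_t]-K_{\text{Var}}\bigr)$, and exploit the fact that the continuous (fractional Brownian) and the pure-jump (subordinator) parts of the fractional BN-S dynamics are mutually independent. This independence is what makes the split into $P_1$ and $P_2$ natural: $P_1$ carries the Gaussian contribution coming from $X_t$ (the fractional Ornstein-Uhlenbeck process), while $P_2$ carries the contribution coming from the subordinator $Z_t$.

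First I would substitute the closed-form representation \eqref{sigmat} for $\sigma_t^2$ into $\sigma_R^2=\frac{1}{T}\int_0^T\sigma_t^2\,dt$, and apply Fubini to swap the $dt$ integral with $dZ_s$, obtaining
\[
\sigma_R^2=\tfrac{1-e^{-\lambda T}}{\lambda T}(\sigma_0^2-e^{X_0})+\tfrac{1}{T}\!\int_0^T e^{X_s}ds+\tfrac{1}{\lambda T}\!\int_0^T(1-e^{-\lambda(T-s)})\,dZ_s.
\]
Conditioning on $\mathcal{F}_t$, each of the two stochastic integrals is split at $t$; the parts on $[0,t]$ are already $\mathcal{F}_t$-measurable, while the parts on $[t,T]$ are evaluated using the conditional-expectation formula \eqref{nicksen}, $E^{\mathbb{Q}}[e^{X_s}\mid\mathcal{F}_t]=\exp(M(s,t)+V(s,t)/2)$, and the independent-increments property of the Lévy subordinator, which gives $E^{\mathbb{Q}}[\int_t^T(1-e^{-\lambda(T-s)})dZ_s\mid\mathcal{F}_t]=\kappa_1\int_t^T(1-e^{-\lambda(T-s)})ds$. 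At this stage, grouping the terms that involve the continuous component $e^{X_s}$ yields exactly $P_1(t)$ as defined in \eqref{p1} (using Theorem \ref{t1} to recast the unconditional past piece in closed form), and everything that involves the subordinator gets labelled $P_2(t,Z_t)$, thereby proving \eqref{mainn}.

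Next I would derive the PIDE \eqref{p2eq}. By construction, $e^{-rt}P(t,\sigma_t^2)$ is a $\mathbb{Q}$-martingale, and $e^{-rt}P_1(t)$ carries only the continuous randomness driven by $W^H$ (with no $Z$-jump component). Hence the purely jump-driven part $e^{-rt}P_2(t,Z_t)$ must itself be a martingale with respect to the filtration generated by $Z$. Applying the Itô formula for functions of the Lévy process $Z_t$ to $P_2(t,Z_t)$,
\[
dP_2(t,Z_t)=\frac{\partial P_2}{\partial t}dt+\int_0^{\infty}\!\!\bigl[P_2(t,Z_{t-}+y)-P_2(t,Z_{t-})\bigr]N(dt,dy),
\]
splitting the Poisson measure into compensator plus compensated jumps, and requiring the finite-variation part of $d(e^{-rt}P_2)$ to vanish, produces exactly \eqref{p2eq}. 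The terminal condition \eqref{finalcon} is obtained by setting $t=T$ in the decomposition: $P(T,\sigma_T^2)=\sigma_R^2-K_{\text{Var}}$, $P_1(T)=K(T)$ by \eqref{kt} (the $\int_t^T$ term vanishes), and matching with Theorem \ref{t2} to identify the constant $E^{\mathbb{Q}}(\sigma_R^2)-K_{\text{Var}}-K(T)$.

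\noindent\textbf{Main obstacle.} The delicate step is the PIDE derivation: the subordinator has only positive jumps and no diffusion, so one must be careful with the small-jump compensator (the $1_{|y|\le 1}$ cutoff and the drift correction $y\,\partial P_2/\partial z$) and with the fact that $P_2$ is simultaneously a function of $(t,Z_t)$ and implicitly carries the continuous-filtration information through the conditioning. The justification that applying Itô's formula to $P_2$ (with the continuous filtration treated as frozen, using independence of $Z$ and $W^H$) correctly yields \eqref{p2eq} is the main technical point; the bookkeeping that matches constants so that $P_1+P_2$ reproduces $P$ exactly is the main algebraic point.
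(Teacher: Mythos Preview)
Your proposal is correct and follows essentially the same route as the paper: define the discounted price as a conditional expectation, plug in the explicit representation \eqref{sigmat} for $\sigma_t^2$, separate the $e^{X_s}$ (Gaussian) contribution from the subordinator contribution to obtain $P_1$ and $P_2$, and then derive the PIDE \eqref{p2eq} by applying It\^o's formula for jump processes to $P_2$ and killing the drift via the martingale property. The only cosmetic difference is that the paper argues directly that $\tilde P_2(t,Z_t)=e^{r(T-t)}P_2(t,Z_t)$ is a martingale because it is itself a conditional expectation, rather than inferring this by subtracting $P_1$ from $P$ as you do.
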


\begin{proof}
We denote $\tilde{P}(t, \sigma_t^2)= e^{r(T-t)}P(t, \sigma_t^2)$. Then, 
$\tilde{P}(t, \sigma_t^2)= E^{\mathbb{Q}}\left[\left(\sigma_R^2- K_{\text{Var}}\right)\big| \mathcal{F}_t \right]$, is clearly a martingale by construction. We observe, by \eqref{sigmat}, 
\begin{align*}
\tilde{P}(t, \sigma_t^2) &= E^{\mathbb{Q}}\left[\left( \frac{1}{T} \int_0^T ( m_s + e^{X_s}+ \int_0^s e^{-\lambda (s-u)}\,dZ_u)\, ds - K_{\text{Var}}\right)\big| \mathcal{F}_t \right] \\
& = \tilde{P}_1(t) + \tilde{P}_2(t, Z_t),
\end{align*}
where 
\begin{align}
\label{p1}
\tilde{P}_1(t)=  E^{\mathbb{Q}}\left[ \frac{1}{T} \int_0^T e^{X_s} \,ds \big| \mathcal{F}_t \right]& = \frac{1}{T} \int_0^T E^{\mathbb{Q}}(e^{X_s} | \mathcal{F}_t)\, ds \nonumber\\
&= \frac{1}{T} \left(\int_t^T E^{\mathbb{Q}}(e^{X_s} | \mathcal{F}_t)\, ds + \int_0^t E^{\mathbb{Q}}(e^{X_s} | \mathcal{F}_t)\, ds \right) \nonumber \\
&= \frac{1}{T} \left(\int_t^T E^{\mathbb{Q}}(e^{X_s} | \mathcal{F}_t)\, ds + \int_0^t E^{\mathbb{Q}}(e^{X_s})\, ds \right) \nonumber \\
&= \frac{1}{T} \left(\int_t^T e^{M(s, t)+\frac{1}{2}V(s, t)} \, ds + \int_0^t e^{e^{-\alpha s}(X_0-m)+m+\frac{1}{2}v(s)}\, ds \right), 
\end{align}
where $M(s,t)$, $N(s,t)$, and $v(s)$ are given by \eqref{Mttt}, \eqref{Nttt}, and \eqref{vt}, respectively; and
\begin{equation}
\label{p2}
\tilde{P}_2(t,  Z_t)= E^{\mathbb{Q}}\left[\left( \frac{1}{T} \int_0^T ( m_s  + \int_0^s e^{-\lambda (s-u)}\,dZ_u)\, ds - K_{\text{Var}}\right)\big| \mathcal{F}_t \right].
\end{equation}
We denote $P_1(t)= e^{-r(T-t)}\tilde{P}_1(t)$, and $P_2(t,  Z_t)= e^{-r(T-t)}\tilde{P}_2(t,  Z_t)$. We observe that, $P_1(T)= \tilde{P}_1(T)=  \frac{1}{T} \int_0^T E^{\mathbb{Q}}(e^{X_s} | \mathcal{F}_T)\, ds =\frac{1}{T} \int_0^T E^{\mathbb{Q}}(e^{X_s} )\, ds= K(T)$, where $K(T)$ is given by \eqref{kt}. Thus $P_2(T,  Z_T)= \tilde{P}_2(T,  Z_T)= P(T, \sigma^2_T)- K(T)= E^{\mathbb{Q}}(\sigma_R^2)- \text{K}_{\text{Var}}- K(T)$. We note that, from the construction, both $\tilde{P}_1(t)$ and $\tilde{P}_2(t,  Z_t)$ are martingales. We observe
\begin{align*}
& d\tilde{P}_2(t,  Z_t) = d( e^{r(T-t)}P_2(t,  Z_t)) \\
&=  e^{r(T-t)}\left[\left(-rP_2+ \frac{\partial P_2}{\partial t} + \int_0^{\infty} \left(P_2(t, Z_{t-}+y)- P_2(t, Z_{t-})- y \frac{\partial P_2}{\partial z} 1_{|y| \leq 1}\right) \nu(dy)\right)\,dt     \right. \\
&+ \left. \int_0^{\infty} \left(P_2(t, Z_{t-}+y)- P_2(t, Z_{t-})\right)\tilde{J}_Z(dt, dy) \right],
\end{align*}
where $\frac{\partial P_2}{\partial z} $ represents the partial derivative of $P_2$ with respect to its second component, and $\tilde{J}_Z$ is the compensated Poisson random measure for $Z$. 
As $\tilde{P}_2$ and the second integral on the right hand side are martingales, therefore $P_2$ satisfies \eqref{p2eq}.
\end{proof}

\begin{remark}
\label{rem1}
The process $P_2(t,  Z_t)$ in Theorem \ref{thm1} may be thought of as a variance swap price for swaps that are dependent only on the ``jump" process.
\end{remark}

We now propose a hedging procedure for the stock price $S_t$ that satisfies  \emph{fractional BN-S stochastic volatility model} given by \eqref{se500}, \eqref{se501}, \eqref{se502}, and \eqref{se503}. For this we introduce a ``fractional stable process" $Y_t$ given by (with respect to $\mathbb{Q}$) the exponential of fractional Ornstein-Uhlenbeck process. 
\begin{equation}
\label{Yt}
Y_t= Y_0 e^{X_t},
\end{equation}
where $X_t$ is given by \eqref{se503}. We assume $Y_0=1$. We denote the integrated process by $Y_R(t)= \frac{1}{T} \int_0^T Y_s\,ds$. Then the arbitrage-free pricing of such process is given by 
$P_R(t)= e^{-r(T-t)}\tilde{P}_1(t)= P_1(t)$, where $P_1(t)$ is given by \eqref{p1}. Consequently, the price of $Y_R(t)$ may be considered as the process $P_1(t)$. For the follwoing theorem, we denote the L\'evy measure of $Z$ by $\nu_Z(\cdot)$. 

Consider a trading strategy, where the variance swap price $P_2(t,Z_t)$ (as described in Remark \ref{rem1}) and the price process $P_1(t)$ of $Y_R(t)$ are taken as a long and short positions, respectively. Then the following result provides an optimal hedging strategy in terms of the minimization of the quadratic hedging error. 
\begin{theorem}
\label{biggy1}
The risk-minimizing quadratic hedge amounts to holding a position of the underlying $S$ equal to $\phi_t= \Delta(t, S_t, Y_t)$, where
\begin{equation}
\label{optphit}
 \Delta(t, S_t, Y_t)= \frac{\frac{1}{S_t}\int_{\mathbb{R}_{+}} \left( \hat{P}_2(t, Z_{t-}+x)-\hat{P}_2(t, Z_{t-}) \right) (e^{\rho x}-1) \nu_Z(dx)}{\sigma_t^2 +\int_{\mathbb{R}_{+}} (e^{\rho x}-1)^2 \nu_Z(dx) }.
\end{equation}
\end{theorem}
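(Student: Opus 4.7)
The plan is to set this up as a local risk-minimization problem and carry out a pointwise minimization of the conditional variance of the instantaneous hedging error, exactly in the style of the F\"ollmer-Schweizer decomposition adapted to jump-diffusion models. Take the target claim to be $C_T := P_2(T, Z_T) - P_1(T)$, i.e., the long variance-swap-on-jumps position against the short position in $Y_R(t)$ whose price is $P_1(t)$. Consider a self-financing portfolio that holds $\phi_t$ units of $S_t$ plus a cash account, and let $e_t$ denote the discounted replication error between this portfolio and the target. By the standard quadratic hedging criterion (see the references \cite{sub1,hum,sub2} already cited in the text), the locally risk-minimizing strategy is obtained by minimizing, pointwise in $t$, the instantaneous conditional variance $\frac{d\langle e \rangle_t}{dt}$ with respect to $\phi_t$.

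Next I would decompose the dynamics of the three building blocks. Applying It\^o's formula with jumps to \eqref{se500}--\eqref{se501}, one has
\begin{equation*}
\frac{dS_t}{S_{t-}} = (\text{drift})\,dt + \sigma_t\, dW_t + (e^{\rho\,\Delta Z_t}-1),
\end{equation*}
whose $\mathbb{Q}$-martingale part has a continuous $W$-component $\sigma_t S_{t-}\,dW_t$ and a jump component with integrand $S_{t-}(e^{\rho x}-1)$ against the compensated measure $\tilde J_Z$. The process $P_1(t)$ in \eqref{p1} depends on $t$ only through the conditioning on $\mathcal{F}_t$ of the Gaussian functional $M(s,t)+\tfrac12 V(s,t)$ of $X_\cdot$; hence $P_1$ is driven only by the fractional Brownian motion $W^H$, which is independent of both $W$ and $Z$. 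By Theorem \ref{thm1}, $P_2(t,Z_t)$ satisfies the integro-differential equation \eqref{p2eq}, so its martingale part under $\mathbb{Q}$ is purely a jump integral with integrand $P_2(t, Z_{t-}+x) - P_2(t, Z_{t-})$ against $\tilde J_Z$.

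Assembling these pieces, the martingale part of $de_t$ has three orthogonal sources: the $W^H$-part inherited from $-dP_1$, a $W$-part equal to $-\phi_t \sigma_t S_{t-}\,dW_t$, and a compensated-jump part with integrand $(P_2(t,Z_{t-}+x)-P_2(t,Z_{t-})) - \phi_t S_{t-}(e^{\rho x}-1)$. Because $W^H$, $W$, and the jump measure of $Z$ are mutually independent, the three pieces are orthogonal in the sense of predictable quadratic variation, so the $W^H$ contribution is unhedgeable and drops out of the minimization. What remains to minimize at time $t$ is
\begin{equation*}
\phi_t^2\, \sigma_t^2 S_{t-}^2 + \int_{\mathbb{R}_+}\bigl(\phi_t S_{t-}(e^{\rho x}-1) - (P_2(t,Z_{t-}+x)-P_2(t,Z_{t-}))\bigr)^2 \nu_Z(dx).
\end{equation*}
This is a one-dimensional convex quadratic in $\phi_t$; setting its derivative with respect to $\phi_t$ to zero and solving gives exactly \eqref{optphit}.

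The main obstacles I expect are technical rather than conceptual. First, one must justify the It\^o decomposition of $P_2(t,Z_t)$ directly from \eqref{p2eq}, which requires enough smoothness/integrability of $P_2$ in its second argument for $\int(P_2(t,Z_{t-}+x)-P_2(t,Z_{t-}))^2\nu_Z(dx)$ to be finite; this is the standard hypothesis needed so that the jump integral is a square-integrable martingale. Second, one must articulate the orthogonality of the $W^H$-driven component against the $(W,Z)$-driven components carefully, since $W^H$ has zero quadratic variation and vanishing covariation with the semimartingale parts, which is precisely what makes that component non-hedgeable in the quadratic sense and legitimates dropping it from the minimization.
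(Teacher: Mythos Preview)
Your proposal reaches the correct minimizer via essentially the same first-order computation, but the route differs from the paper's in two notable ways.

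First, you frame the problem as local (F\"ollmer--Schweizer) risk minimization, i.e.\ pointwise minimization of $d\langle e\rangle_t/dt$. The paper instead works with the \emph{global} mean-variance criterion: it writes the terminal discounted hedging error $\epsilon(\phi,\Pi_0)$ explicitly as a sum of a $W$-stochastic integral and a compensated-jump integral, computes $E^{\mathbb{Q}}[\epsilon(\phi,\Pi_0)^2]$ by the It\^o isometry, and then minimizes the resulting $dt$-integrand pointwise in $\phi_t$, checking the second-order condition. Under $\mathbb{Q}$ the two criteria deliver the same $\phi_t$ here, but they are not the same argument.

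Second, and more substantively, the paper does \emph{not} treat the $W^H$-driven piece as an unhedgeable orthogonal residual. The short position in $Y_R$ is used as a genuine hedging instrument: its discounted terminal gain enters the error as the $+\hat P_1(T)$ term in \eqref{errorterm}, and this exactly cancels the $\hat P_1$ part coming from the decomposition \eqref{alig3} of the variance swap. After that cancellation the error $\epsilon(\phi,\Pi_0)$ lives entirely in the $(W,\tilde J_Z)$ world, so the isometry and the pointwise quadratic minimization are carried out purely within the semimartingale framework. Your route instead keeps $-dP_1$ in the error and argues that it drops out by orthogonality; this gives the same optimal $\phi_t$, but it forces you to confront precisely the issues you flag in your last paragraph (zero quadratic variation of $W^H$, non-semimartingale behaviour, meaning of ``predictable covariation'' with $W$ and $\tilde J_Z$). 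The paper's exact cancellation sidesteps all of that, which is really the point of introducing the auxiliary tradable $Y_R$ with price process $P_1$ in the first place.
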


\begin{proof}
Under $\mathbb{Q}$, the discounted commodity price $\hat{S}_t= e^{-rt} S_t$ is a martingale. We consider a self financing strategy $(\phi_t^0, \phi_t)$ with $\phi \in L^2(\hat{S})$, where 
$$\hat{\mathcal{S}}= \{ \phi \text{  predictable  and  } E^{\mathbb{Q}}| \int_0^T \phi_t\, d\hat{S}_t|^2 < \infty \}.$$ The discounted value of the portfolio ($\hat{\Pi}$) is then a martingale with terminal value given by
\begin{align}
\label{alig1}
\hat{\Pi}_T(\phi) & = \int_0^T \phi_t\, d\hat{S}_t=  \int_0^T \phi_t \hat{S}_t \left(  \sigma_t\,dW_t + \int_{\mathbb{R}_{+}}(e^{\rho x}-1)\tilde{J}_{Z}(dt, dx)\right) \nonumber \\
& =  \int_0^T \phi_t \hat{S}_t \sigma_t\,dW_t +  \int_0^{T} \phi_t \hat{S}_t  \int_{\mathbb{R}_{+}}(e^{\rho x}-1)\tilde{J}_{Z}(dt, dx),
\end{align}
where $\tilde{J}_{Z}(\cdot, \cdot)$ is the compensated Poisson measure associated with $Z$.  Next, we consider a variance swap written on $S_t$, and denote $\hat{P}(t, \sigma_t^2)= e^{-rt}P(t, \sigma_t^2)$, $\hat{\tilde{P}}(t, \sigma_t^2)= e^{-rt}\tilde{P}(t, \sigma_t^2)$, and $\Pi_0= e^{-rT} \tilde{P}(0,\sigma_0^2)=P(0,\sigma_0^2)$, then, by It\^o formula and the proof of the Theorem \ref{thm1}, we obtain 
\begin{align}
\label{alig3}
e^{-rT}\tilde{P}(t,\sigma_t^2)-\Pi_0 = \hat{P}_1(t)+ \int_0^{t} \int_{\mathbb{R}_{+}}\left(\hat{P}_2(s, Z_{s-}+x)-\hat{P}_2(s, Z_{s-})\right)  \tilde{J}_{Z}(ds, dx).
\end{align}

Next, we go short in $Y_R(t)$. As the price of $Y_R(t)$ is shown to satisfy $P_1(t)$, we obtain $\hat{Y}_R(T) -\hat{Y}_R(0)= \hat{P}_1(T)$.  We denote the hedging error by $\epsilon(\phi, \Pi_0)= \hat{\Pi}_T(\phi)+ \Pi_{0}-  \hat{\tilde{P}}(T, \sigma_T^2 )+ \hat{P}_1(T)$. We note that $\tilde{P}(T,\sigma_T^2)= P(T,\sigma_T^2)$, and thus we have
\begin{equation}
\label{errorterm}
\epsilon(\phi, \Pi_0)= \hat{\Pi}_T(\phi)+ \Pi_{0}- \hat{P}(T, \sigma_T^2 )+ \hat{P}_1(T).
\end{equation} Considering expressions in \eqref{alig3} at $t=T$, and subtracting that from \eqref{alig1} we obtain
\begin{align*}
 \epsilon(\phi, \Pi_0) & =  \int_0^T \phi_t \hat{S}_t \sigma_t \,dW_t\\
& + \int_0^{T} \int_{\mathbb{R}_{+}}\left[\phi_t \hat{S}_t  (e^{\rho x}-1)-\left( \hat{P}_2(s, Z_{s-}+x)-\hat{P}_2(s, Z_{s-}) \right) \right]\tilde{J}_{Z}(dt, dx).
\end{align*}
 Using isometry formula we obtain the variance of $\epsilon(\phi, \Pi_0) $ as 
\begin{align*}
& E^{\mathbb{Q}}[\epsilon(\phi, \Pi_0)]^2 =  E^{\mathbb{Q}} \left[ \int_0^T \phi_t^2 \hat{S}_t^2  \sigma_t^2 \,dt\right] \\
& + E^{\mathbb{Q}} \left[\int_0^{T} \int_{\mathbb{R}_{+}}\left[\phi_t \hat{S}_t  (e^{\rho x}-1)-\left( \hat{P}_2(s, Z_{s-}+x)-\hat{P}_2(s, Z_{s-}) \right) \right]^2  \nu_Z(dx)\,dt \right].
\end{align*}
The optimal (risk-minimizing) hedge is obtained by minimizing this expression with respect to $\phi_t$. Differentiating the quadratic expression we obtain the first order condition
\begin{align}
\label{buf}
2 \phi_t \hat{S}_t^2  \sigma_t^2+ 2 \int_{\mathbb{R}_{+}}\left[\phi_t \hat{S}_t  (e^{\rho x}-1)-\left( \hat{P}_2(t, Z_{t-}+x)-\hat{P}_2(t, Z_{t-}) \right) \right]\hat{S}_t  (e^{\rho x}-1)  \nu_Z(dx)=0.
\end{align}
Also, in this case, the second-order condition is positive, which confirms the minimization. Solution of \eqref{buf} is given by \eqref{optphit}.
\end{proof}

\section{Numerical Analysis}
\label{sec5}

In this section, we use the VIX data for numerical analysis. The VIX, which is officially known as the Chicago Board Options Exchange (CBOE) Volatility Index, is considered to be an estimator of fear and greed in the stock market (see \cite{vixt}). More precisely, VIX measures the implied volatility in S\&P 500 options. Through the use of a wide variety of option prices, the index gives an estimation of thirty-day implied volatility as priced by the S\&P 500 index options market. It also is a prominent candidate for implementing the \emph{fractional BN-S stochastic volatility model} with $\rho<0$ in \eqref{se501}, as the VIX has an inverse relationship to the performance of the S\&P 500 - a fall of the latter usually corresponds to a rise of the former.

We take an input of the 2008 and 2020 VIX data, and compare how our formulas for the conditional expectation of realized variance compares to the actual realized variance. Both of these years experienced large jumps in volatility, so we are interested in how the model deals with the data across a year. We also include how other models in the literature perform in estimating these quantities to see if our model has improved from previous models. We assume for our fractional BN-S model that $\sigma_0^2=e^{X_0}$. Again, the intuition here is that the process has not started shortly after a jump. We collect the three models we will compare below.
\begin{align*}
    \text{Heston:} &\qquad d\sigma_t^2=\alpha(e^{m}-\sigma_t^2)dt+\gamma\sigma_t dW_t,\\
    \text{BN-S:} &\qquad d\sigma_t^2=-\lambda\sigma_t^2 dt + dZ_{\lambda t},\\
    \text{Fractional BN-S:} &\qquad d\sigma_t^2=\lambda(e^{X_t}-\sigma_t^2)dt+e^{X_t}dX_t+dZ_{t},\quad dX_t=\alpha(m-X_t)dt+\nu dW_t^H,
\end{align*}
and their corresponding expected realized variances
\begin{align*}
    \text{Heston:} &\qquad E^{\mathbb{Q}}[\sigma_R^2]=\frac{1-e^{-\alpha T}}{\alpha T}(\sigma_0^2-e^{m})+e^{m},\\
    \text{BN-S:} &\qquad E^{\mathbb{Q}}[\sigma_R^2]=\frac{1}{T}\left(\lambda^{-1}(1-e^{-\lambda T})\sigma_0^2+\kappa_1(T-\lambda^{-1}(1-e^{-\lambda T}))\right)+\lambda\kappa_2,\\
    \text{Fractional BN-S:} &\qquad E^{\mathbb{Q}}[\sigma_R^2]=\frac{\kappa_1}{\lambda T}\left(T-\lambda^{-1}(1-e^{-\lambda T}\right)+\frac{1}{T}\int_{0}^{T}e^{e^{-\alpha t}(X_0-m)+m+\frac{1}{2}v(t)}dt.
\end{align*}
We assume that the continuous mean reversion rates of the Heston and fractional Ornstein-Uhlenbeck processes are the same, $\alpha$. We also assume that the reversion to mean after jumps of the BN-S and fractional BN-S are the same, $\lambda$. This leads to more fair comparisons between the two models.

Let us now describe the process we used to generate the following curves. There is a computational difficulty here, in that as can be seen with \eqref{Mttt} and \eqref{Nttt}, there are improper integrals involved in the computation of forward curves. These integrals admit no nice closed form, and thus it became computationally impractical to use the full history in simulations. Instead, we only use the non-conditional expectations derived earlier in the paper which we can use the initial condition as the real-world input.

We define a moving window of $T$ days, and we plot the actual realized variance across these $T$ days. Then we also plot what the expected realized variance is for those $T$ days if we assumed the initial condition $\sigma_0^2=\text{Average of previous } T \text{ days}$. This allows the model to still make some use of the history while also allowing the computations to be done in a reasonable amount of time. Finally, we keep track of several different errors between the models and the real-world data to compare the fractional BN-S with other models.

Let us quickly catalog the parameters we used to generate the following graphs: $\alpha = 1/30,\, m=\log(400),\,\nu=1/3,\,H=3/4,\,\lambda=1/20,\,\kappa_1=30,\,\kappa_2=25.$

Figures 1 and 2 show the massive spike in the variance in 2008. The dates are January 1st, 2008 to December 31st, 2008. Here we can see that the fractional model starts to perform significantly better around these large spikes. Similarly, in Figures 3 and 4, we have the jump in volatility from early 2020. The dates are January 1st, 2020 to December 31st, 2020. Here we also see how the fractional model performs better. The long-term memory can be seen well in figure 4 as the fractional model stays higher after a spike from its long memory.

\begin{figure}[H]
\centering
\includegraphics[scale=0.8]{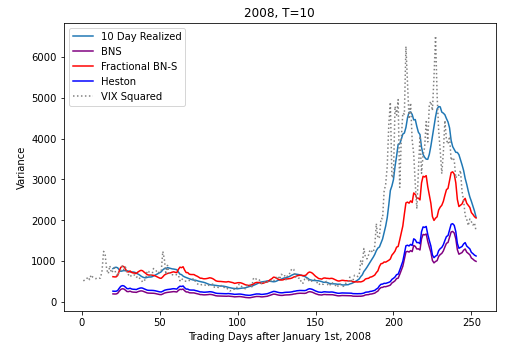}
\caption{Estimated 10-day realized variance in 2008.}
\end{figure}

\begin{figure}[H]
\centering
\includegraphics[scale=0.8]{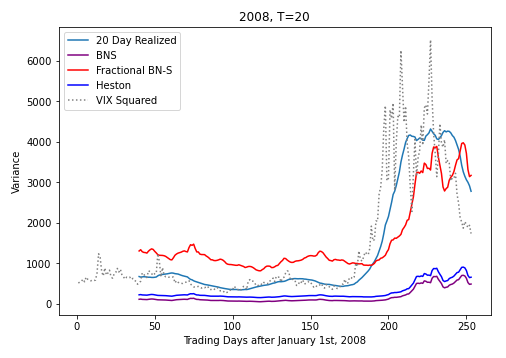}
\caption{Estimated 20-day realized variance in 2008.}
\end{figure}

\begin{figure}[H]
\centering
\includegraphics[scale=0.8]{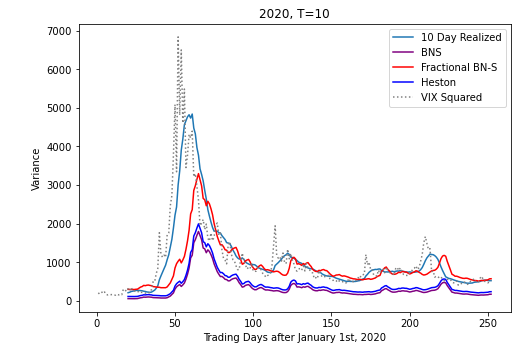}
\caption{Estimated 10-day realized variance in 2020.}
\end{figure}

\begin{figure}[H]
\centering
\includegraphics[scale=0.8]{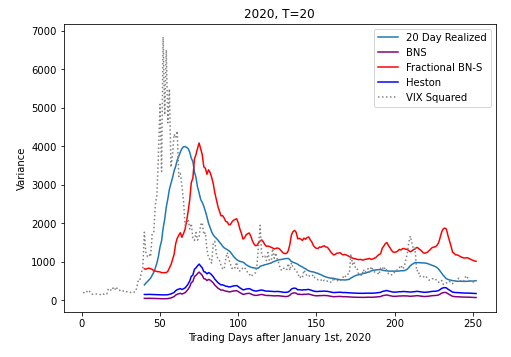}
\caption{Estimated 20-day realized variance in 2020.}
\end{figure}

For the goodness of fit of the above graphs, we use the absolute percentage error (APE), the average absolute error (AAE), the average relative percentage error (ARPE) and the root-mean-square error (RMSE) given by the following formulas.
\begin{equation*}
\text{APE} = \frac{1}{\text{mean value}} \sum_{\text{data points}} \frac{|\text{data value} - \text{model value}|}{\text{data points}},
\end{equation*}
\begin{equation*}
\text{AAE} =  \sum_{\text{data points}} \frac{|\text{data value}- \text{model value}|}{\text{data points}},
\end{equation*}
\begin{equation*}
\text{ARPE} =  \frac{1}{\text{data points}} \sum_{\text{data points}} \frac{|\text{data value}- \text{model value}|}{\text{data points}},
\end{equation*}
\begin{equation*}
\text{RMSE} =  \sqrt { \sum_{\text{data points}} \frac{(\text{data value}- \text{model value})^2}{\text{data points}}}.
\end{equation*}

Tables 1, 2, 3, and 4, correspond to the goodness of fit estimates from Figures 1, 2, 3, and 4, respectively.
\begin{table}[H] 
\centering 
\begin{tabular}{|c|| c| c| c|c| }       
 \hline
  Model & APE & AAE & ARPE & RMSE\\
 \hline
 Heston & 2.17 & 866.89 & 3.70 & 1312.38\\
 \hline
 BN-S & 2.39 & 952.86 & 4.07 & 1392.71\\
 \hline
 Fractional BN-S & 1.12 & 448.84 & 1.91 & 824.41\\
 \hline
\end{tabular}
\caption{Errors for estimated 10-day realized variance in 2008.}
\end{table}

\begin{table}[H] {
\centering 
\begin{tabular}{|c|| c| c| c|c| }       
 \hline
 Model & APE & AAE & ARPE & RMSE\\
 \hline
 Heston & 2.76 & 1101.33 & 5.14 & 1660.11\\
 \hline
 BN-S & 3.06 & 1221.49 & 5.70 & 1758.76\\
 \hline
 Fractional BN-S & 1.63 & 651.74 & 3.04 & 743.99\\
 \hline
\end{tabular}
\caption{Errors for estimated 20-day realized variance in 2008.}}
\end{table}

\begin{table}[H] {
\centering 
\begin{tabular}{|c|| c| c| c|c| }       
 \hline
 Model & APE & AAE & ARPE & RMSE\\
 \hline
 Heston & 1.64 & 659.78 & 2.83 & 984.54\\
 \hline
 BN-S & 1.84 & 739.25 & 3.17 & 1049.34\\
 \hline
 Fractional BN-S & 0.74 & 296.23 & 1.27 & 665.81\\
 \hline
\end{tabular}
\caption{Errors for estimated 10-day realized variance in 2020.}}
\end{table}

\begin{table}[H] {
\centering 
\begin{tabular}{|c|| c| c| c|c| }       
 \hline
 Model & APE & AAE & ARPE & RMSE\\
 \hline
 Heston & 2.20 & 881.49 & 4.13 & 1190.60\\
 \hline
 BN-S & 2.49 & 999.63 & 4.69 & 1285.62\\
 \hline
 Fractional BN-S & 1.82 & 729.25 & 3.42 & 858.36\\
 \hline
\end{tabular}
\caption{Errors for estimated 20-day realized variance in 2020.}}
\end{table}

As we can see the \emph{fractional BN-S model} does best in comparison to other models when very large deviations from the mean have happened. This is because the model is combining the addition of upward jumps with long-term memory with positive autocorrelation. Hence during these large spikes in the variance, the predicted realized variance will be larger than other models and as we can see be more accurate. It should be noted that this numerical analysis is simplified, and with more computing power we should find even better accuracy for the fractional BN-S model. One of the main points of the model is the long-term memory, and this analysis could only take into account the initial condition rather than the whole history. It seems likely that if this whole history was taken into account that even more interesting behavior would be observed.

\section{Conclusion}
\label{sec6}

In this paper, we introduce and analyze the \emph{fractional BN-S} model. This model builds upon two desirable properties of the long-term variance process in the empirical data: long-term memory and jumps. It takes the long-term memory and positive autocorrelation properties of fractional Brownian motion with $H>1/2$ and the jump properties of the BN-S model. In the process of finding arbitrage-free prices for variance and volatility swaps, we derive some new expressions for the distributions of integrals of continuous Gaussian processes and use them to get an explicit analytic expression for the fair prices of these swaps. We then proceed to obtain analytic expressions for conditional distributions of the realized variance. These expressions turned out to be highly singular, hence simplified methods are used to test the \emph{fractional BN-S model} against real-world data. We find that the \emph{fractional BN-S model} appears to outperform the Heston model and classical BN-S model. Future research with more powerful computational facilities or more optimized numerical methods may lead to more interesting observations from these real-world tests.

\end{document}